\documentclass[12pt]{article}
\usepackage{jcappub}
\usepackage[utf8]{inputenc}
\usepackage[margin=1in]{geometry}
\usepackage{newpxtext,newpxmath}
\usepackage{slashed}

\usepackage{extpfeil}

\usepackage{amssymb}

\usepackage{amsmath,amsthm,mathrsfs} 
\usepackage{tikz}
\usepackage{tikz-cd}
\usetikzlibrary{cd}
\numberwithin{equation}{section}

\newcommand{\MW}{\mathbf{MW}}

\usepackage{stmaryrd} 

\usepackage[nameinlink,capitalize]{cleveref}
\usepackage{bbm}

\newif\ifshowprompts
\showpromptstrue

\hypersetup{
  colorlinks=true,
  linkcolor=blue!40!black,
  citecolor=blue!40!black,
  urlcolor=blue!50!black,
  pdfauthor={Mir Faizal, Arshid Shabir},
  pdftitle={Gödel-Tarski Theorems in Condensed Mathematics: External Truth Predicate and the Geometry of Incompleteness}
}

\usepackage{fancyhdr}
\newcommand{\Sh}{\mathbf{Sh}}

\newcommand{\tr}{\mathrm{tr}}

\newcommand{\sign}{\operatorname{sign}}
\newcommand{\SL}{\mathrm{SL}}

\theoremstyle{plain}
\newtheorem{theorem}{Theorem}[section]

\newtheorem{proposition}[theorem]{Proposition}

\theoremstyle{definition}

\theoremstyle{remark}

\begin{document}
\title{Chern-Simons Couplings, Modular Duality, and Anomaly Cancellation in Abelian F-Theory}

\author[1,2,3,4]{Mir Faizal,}
 \author[2]{Arshid Shabir}

\affiliation[1]{Irving K. Barber School of Arts and Sciences, University of British Columbia Okanagan, Kelowna, BC V1V 1V7, Canada.}
\affiliation[2]{Canadian Quantum Research Center, 460 Doyle Ave 106, Kelowna, BC V1Y 0C2, Canada.}
\affiliation[3]{Department of Mathematical Sciences, Durham University, Upper Mountjoy, Stockton Road, Durham DH1 3LE, UK.}
\affiliation[4]{Computational Mathematics Group, Faculty of Sciences, Hasselt University, Agoralaan Gebouw D, Diepenbeek, 3590 Belgium.}

\emailAdd{mirfaizalmir@gmail.com}
\emailAdd{aslone186@gmail.com}

\abstract{F-theory compactifications with a nontrivial Mordell-Weil group realize abelian gauge symmetry through rational sections, but their consistency is ultimately a statement about the quantum effective action. We show that compactification on a circle makes this statement concrete: the quantized, parity-odd Chern-Simons couplings of the resulting three-dimensional theory provide a one-loop exact and scheme-independent encoding of all local four-dimensional abelian anomalies, including the mixed gauge-gravitational terms, together with their Green-Schwarz cancellation. We determine these Chern-Simons couplings in two logically independent ways, first from flux-induced terms in the M-theory dual description, and second from an explicit one-loop integration over the complete massive spectrum, including Kaluza-Klein towers and Coulomb-branch states. The agreement fixes all normalizations and clarifies how large gauge transformations reorganize the spectrum. We then show compatibility with type IIB modular duality once the known ten-dimensional duality counterterm is included, and we present a fully explicit rank-two example over projective three-space.
}
\maketitle

\tableofcontents

\section{Introduction}

F-theory provides a geometric formulation of the non-perturbative $SL(2,\mathbb Z)$ duality of type IIB string theory in backgrounds with a spatially varying axio-dilaton $\tau=C_0+ i e^{-\phi}$, by identifying $\tau$ with the complex structure of an auxiliary elliptic curve fibered over the compactification space \cite{VafaFTheory,MorrisonVafaI,MorrisonVafaII}. In compact examples, the data of 7-branes and their $(p,q)$ charges are encoded in the degeneration locus of the elliptic fibration, and the type IIB weak-coupling regime is captured by the Sen limit \cite{SenLimit}. For four-dimensional $\mathcal N=1$ vacua, the relevant geometries are elliptically fibered Calabi-Yau fourfolds $\pi:\,Y_4\to B_3$ (possibly singular) together with the choice of a suitable crepant resolution $\widehat Y_4$ in which intersection theory and flux quantization can be formulated precisely; we refer to \cite{WeigandTASI} for a detailed review and standard conventions.

A central structural feature of global F-theory compactifications is that gauge sectors are controlled by the arithmetic and singularity structure of the elliptic fibration. While non-abelian gauge algebras arise from codimension-one singular fibers, abelian gauge factors are intrinsically global and are encoded, in the massless sector, by the Mordell-Weil group of rational sections of the elliptic fibration \cite{MorrisonPark1208,GrimmWeigand1006,WeigandTASI}. Concretely, for an elliptic fibration with $\mathrm{rank}\,\mathbf{MW}(\widehat Y_4)=r$, the Shioda map associates to a basis of independent sections divisor classes whose Poincar\'e-dual harmonic forms yield $r$ abelian gauge fields in the M-theory reduction, uplifted to four-dimensional $U(1)^r$ gauge fields in the F-theory limit; the associated height-pairing matrix on the base determines the gauge kinetic terms and, crucially for our purposes, the Green-Schwarz counterterm structure \cite{CveticGrimmKlevers1210,WeigandTASI}. In addition to these massless abelian symmetries, F-theory compactifications generically admit massive abelian gauge symmetries (the uplift of diagonal $U(1)\subset U(N)$ on 7-branes) whose geometric description involves non-harmonic forms in the M-theory frame \cite{GrimmKerstanPaltiWeigand1107}; in the present work we restrict to the massless abelian sector associated with $\MW(\widehat Y_4)$.

Consistency of four-dimensional $\mathcal N=1$ vacua with chiral matter requires cancellation of all local gauge and mixed gauge-gravitational anomalies. In F-theory, chirality is induced by background $G_4$-flux subject to transversality conditions compatible with four-dimensional Poincar\'e invariance, and anomaly cancellation is implemented by a generalized Green-Schwarz mechanism involving axions descending from the RR four-form (equivalently, from the M-theory three-form) \cite{CveticGrimmKlevers1210}. A powerful and conceptually clean way to extract the anomaly data is to reduce the four-dimensional theory on a circle to three dimensions: the one-loop Chern-Simons couplings of the three-dimensional effective theory encode the four-dimensional anomaly polynomial through their transformation under large gauge transformations around the circle \cite{GrimmKapfer}. Independently, in the M-theory dual on $\widehat Y_4$, the same three-dimensional Chern-Simons couplings arise from the eleven-dimensional Chern-Simons term $C_3\wedge G_4\wedge G_4$ upon reduction in the presence of background flux; the correct normalization and consistency of this coupling, including its interplay with quantization and one-loop terms in M-theory, are standard and can be traced to \cite{WittenFlux,FHMM}. The equality of the Chern-Simons data computed in these two frames is therefore a stringent, intrinsically non-perturbative check of the M/F-theory duality dictionary for abelian gauge sectors with flux \cite{CveticGrimmKlevers1210,GrimmKapfer}.

The term \emph{modular} in the present context refers to the physical $SL(2,\mathbb Z)$ duality of type IIB string theory, geometrized in F-theory by the elliptic fiber. At the quantum level, this duality is known to possess an anomaly in ten dimensions whose cancellation requires specific higher-derivative counterterms; the structure of this $SL(2,\mathbb Z)$ anomaly and its F-theory interpretation were analyzed in \cite{GaberdielGreen}, with further related discussions of curvature couplings and duality-consistent counterterms in \cite{BachasBainGreen9903210}. In four-dimensional compactifications, duality covariance is reflected geometrically through the Hodge line bundle of the elliptic fibration, and the abelian anomaly coefficients relevant for the Green-Schwarz mechanism are naturally expressed in terms of intersection-theoretic quantities (notably the height pairing) that are intrinsic to the elliptic fibration and insensitive to changes of the fiber homology basis implementing $SL(2,\mathbb Z)$. This makes the abelian anomaly data a particularly clean arena to isolate \emph{duality-invariant} structures in F-theory that are simultaneously computable (from geometry and flux) and verifiable (from the one-loop spectrum in the circle-reduced theory).

In this paper we give a referee-oriented, fully normalized derivation of the one-loop effective action for the abelian vector multiplet sector of four-dimensional F-theory compactifications on elliptically fibered Calabi-Yau fourfolds with non-trivial Mordell-Weil group. Our central observables are the three-dimensional Chern-Simons couplings obtained after circle reduction, which we compute and match by two logically independent methods: (A) reduction of M-theory on $\widehat Y_4$ with background $G_4$-flux, expressing the result in terms of the Shioda map and height pairing \cite{CveticGrimmKlevers1210}; and (B) an explicit one-loop computation from the massive Kaluza-Klein and Coulomb-branch spectrum in three dimensions, following the general anomaly/CS logic of \cite{GrimmKapfer}. This matching fixes the modular (duality) anomaly coefficients in a way that is manifestly compatible with the geometric $SL(2,\mathbb Z)$ action on the elliptic fiber and with the known ten-dimensional duality anomaly structure \cite{GaberdielGreen}. Finally, to make all steps fully reproducible, we work out an explicit rank-two Mordell-Weil example with gauge group $U(1)^2$ over $B_3=\mathbb P^3$ and display the complete height-pairing and flux-induced gauging data, reproducing the anomaly-cancellation relations in closed form \cite{CveticKleversPiragua1306}.

\section{Geometric setup}

Let $B$ be a smooth complex projective threefold, with canonical bundle $K_B$ and anti-canonical class $\bar K_B:=c_1(K_B^{-1})\in H^{1,1}(B,\mathbb{Z})$. We fix a smooth projective fourfold $\pi:\mathcal{Y}\to B$ together with a flat, surjective morphism whose geometric generic fibre is a smooth genus-one curve and which admits at least one rational section; choosing such a section as origin equips the smooth locus $\pi^{-1}(U)\to U$ over the maximal open set $U\subset B$ where $\pi$ is smooth with the structure of an elliptic curve fibration in the sense of group schemes. Throughout, $\mathcal{Y}$ is assumed to be a crepant resolution of a (possibly singular) Weierstrass model over $B$, so that $K_{\mathcal{Y}}\simeq\mathcal{O}_{\mathcal{Y}}$ and intersection theory on divisor classes is unambiguous; this is the precise geometric input needed for the F-theory/M-theory duality dictionary in four-dimensional vacua \cite{VafaFTheory,MorrisonVafaI,MorrisonVafaII,WeigandTASI}. All divisor Define-Pullback-Pushforward operations below are taken in the numerical N\'eron-Severi groups $N^1(\mathcal{Y})$ and $N^1(B)$, or equivalently in $H^{1,1}(\mathcal{Y})\cap H^2(\mathcal{Y},\mathbb{Z})$ and $H^{1,1}(B)\cap H^2(B,\mathbb{Z})$ modulo torsion, with intersection products interpreted as cup products/Poincar\'e duality on the smooth spaces \cite{Fulton1998}.

A universal presentation of an elliptic fibration with section is provided by a Weierstrass model. Let $\mathcal{L}$ be a line bundle on $B$ and consider the projective bundle
\begin{equation}
p:\ \mathbb{P}:=\mathbb{P}_B\!\bigl(\mathcal{O}_B\oplus \mathcal{L}^{2}\oplus \mathcal{L}^{3}\bigr)\ \longrightarrow\ B,
\end{equation}
with Grothendieck convention and tautological quotient line bundle $\mathcal{O}_{\mathbb{P}}(1)$; write $H:=c_1(\mathcal{O}_{\mathbb{P}}(1))\in H^{1,1}(\mathbb{P})$. The fibrewise homogeneous coordinates $(x:y:z)$ are naturally sections
\begin{equation}
x\in H^0\!\bigl(\mathbb{P},\mathcal{O}_{\mathbb{P}}(1)\otimes p^{*}\mathcal{L}^{2}\bigr),\quad
y\in H^0\!\bigl(\mathbb{P},\mathcal{O}_{\mathbb{P}}(1)\otimes p^{*}\mathcal{L}^{3}\bigr),\quad
z\in H^0\!\bigl(\mathbb{P},\mathcal{O}_{\mathbb{P}}(1)\bigr),
\end{equation}
and a Weierstrass hypersurface $Y_0\subset \mathbb{P}$ is cut out by an equation
\begin{equation}\label{eq:Weierstrass}
y^{2}z = x^{3} + f\,x z^{2} + g\,z^{3},
\end{equation}
where the coefficients are sections
\begin{equation}\label{eq:fgbundles}
f\in H^{0}\!\bigl(B,\mathcal{L}^{4}\bigr),\qquad g\in H^{0}\!\bigl(B,\mathcal{L}^{6}\bigr),
\end{equation}
so that \eqref{eq:Weierstrass} is a section of $\mathcal{O}_{\mathbb{P}}(3)\otimes p^{*}\mathcal{L}^{6}$. The discriminant
\begin{equation}\label{eq:discriminant}
\Delta := 4f^{3}+27g^{2}\ \in\ H^{0}\!\bigl(B,\mathcal{L}^{12}\bigr)
\end{equation}
defines the discriminant divisor $\{\Delta=0\}\subset B$ over which the fibre degenerates. In the F-theory interpretation, the complex structure modulus of the elliptic fibre geometrizes the type IIB axio-dilaton and its $\mathrm{SL}(2,\mathbb{Z})$ duality; geometrically, the modular weights of $f,g,\Delta$ in \eqref{eq:fgbundles}-\eqref{eq:discriminant} are encoded by the Hodge line bundle of the fibration, and the ensuing mapping-class-group action on the fibre does not alter the intersection-theoretic data intrinsic to $\pi:\mathcal{Y}\to B$ \cite{VafaFTheory,MorrisonVafaI,MorrisonVafaII,WeigandTASI}.

\begin{proposition}[Canonical bundle of a Weierstrass model]\label{prop:canonical}
With notation as above, the canonical bundle of the Weierstrass hypersurface $Y_0\subset \mathbb{P}$ satisfies
\begin{equation}\label{eq:KY}
K_{Y_0}\ \simeq\ \bigl(p|_{Y_0}\bigr)^{*}\!\bigl(K_B\otimes \mathcal{L}\bigr).
\end{equation}
In particular, if $Y_0$ is normal and admits a crepant resolution $\mathcal{Y}\to Y_0$, then $K_{\mathcal{Y}}\simeq\mathcal{O}_{\mathcal{Y}}$ holds if and only if $\mathcal{L}\simeq K_B^{-1}$.
\end{proposition}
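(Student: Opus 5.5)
The plan is to use adjunction on the ambient projective bundle. First I compute $K_{\mathbb{P}}$ from the relative Euler sequence. With Grothendieck convention, $\mathbb{P}=\mathbb{P}_B(\mathcal{E})$ with $\mathcal{E}=\mathcal{O}_B\oplus\mathcal{L}^{2}\oplus\mathcal{L}^{3}$ of rank $3$. The relative tangent bundle fits into
\begin{equation*}
0\to \mathcal{O}_{\mathbb{P}}\to p^{*}\mathcal{E}^{\vee}\otimes\mathcal{O}_{\mathbb{P}}(1)\to T_{\mathbb{P}/B}\to 0 .
\end{equation*}
Here I have to match the convention to the stated fact that $x,y,z$ are sections of $\mathcal{O}_{\mathbb{P}}(1)\otimes p^{*}\mathcal{L}^{2}$, $\mathcal{O}_{\mathbb{P}}(1)\otimes p^*\mathcal{L}^{3}$, $\mathcal{O}_{\mathbb{P}}(1)$. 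I will simply take the middle term to be $\mathcal{O}_{\mathbb{P}}(1)\oplus\mathcal{O}_{\mathbb{P}}(1)\otimes p^*\mathcal{L}^2\oplus \mathcal{O}_{\mathbb{P}}(1)\otimes p^*\mathcal{L}^3$, whose three summands carry the coordinates. Taking determinants gives
\begin{equation*}
K_{\mathbb{P}/B}\simeq \mathcal{O}_{\mathbb{P}}(-3)\otimes p^{*}\mathcal{L}^{-5},
\end{equation*}
and hence $K_{\mathbb{P}}\simeq p^{*}K_B\otimes\mathcal{O}_{\mathbb{P}}(-3)\otimes p^*\mathcal{L}^{-5}$ from $0\to T_{\mathbb{P}/B}\to T_{\mathbb{P}}\to p^*T_B\to 0$.

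Next, $Y_0$ is a Cartier divisor in the smooth variety $\mathbb{P}$, cut out by a section of $\mathcal{O}_{\mathbb{P}}(3)\otimes p^*\mathcal{L}^{6}$, as noted after \eqref{eq:Weierstrass}. It is therefore Gorenstein, and adjunction $\omega_{Y_0}\simeq (K_{\mathbb{P}}\otimes\mathcal{O}_{\mathbb{P}}(Y_0))|_{Y_0}$ applies without any smoothness assumption on $Y_0$. Combining gives
\begin{equation*}
K_{Y_0}\simeq \bigl(p^{*}K_B\otimes \mathcal{O}_{\mathbb{P}}(-3)\otimes p^*\mathcal{L}^{-5}\otimes \mathcal{O}_{\mathbb{P}}(3)\otimes p^*\mathcal{L}^{6}\bigr)\big|_{Y_0}=(p|_{Y_0})^{*}(K_B\otimes\mathcal{L}).
\end{equation*}
This is \eqref{eq:KY}.

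For the second statement, let $\rho:\mathcal{Y}\to Y_0$ be crepant, so $K_{\mathcal{Y}}\simeq\rho^{*}K_{Y_0}\simeq \pi^{*}(K_B\otimes\mathcal{L})$ with $\pi=p\circ\rho$. If $\mathcal{L}\simeq K_B^{-1}$, this is trivial. Conversely, suppose $\pi^{*}M\simeq\mathcal{O}_{\mathcal{Y}}$ with $M=K_B\otimes\mathcal{L}$. Since $\pi$ is proper and surjective with connected fibres, $\pi_*\mathcal{O}_{\mathcal{Y}}=\mathcal{O}_B$. This holds because $\pi$ is flat with geometrically integral generic fibre and $B$ is normal, or alternatively via Stein factorization together with the existence of a section. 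The projection formula then gives $M\simeq \pi_*\pi^*M\simeq\pi_*\mathcal{O}_{\mathcal{Y}}=\mathcal{O}_B$. As a simpler alternative, pull back along the zero section $\sigma:B\to\mathcal{Y}$ (lifted through the resolution, which is an isomorphism near $z\neq0$ away from the singular locus and extends by properness/normality of $B$): $M\simeq\sigma^*\pi^*M\simeq\mathcal{O}_B$.

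I expect the main obstacle to be bookkeeping rather than content: getting the signs and twists in the relative Euler sequence to agree with the paper's Grothendieck convention and its assignment of $x,y,z$. If the conventions are taken literally as projectivizing quotients of $\mathcal{E}$, the middle term would be $p^*\mathcal{E}^{\vee}(1)$ and the degrees of $x,y$ would carry $\mathcal{L}^{-2},\mathcal{L}^{-3}$. I would resolve this by fixing the twists directly from the stated section spaces of $x,y,z$, since the Weierstrass equation's homogeneity forces exactly those weights. A minor secondary point is justifying adjunction for singular $Y_0$ and $\pi_*\mathcal{O}_{\mathcal{Y}}=\mathcal{O}_B$; both are standard.
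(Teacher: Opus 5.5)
Your proposal is correct and follows the paper's proof: adjunction on $\mathbb{P}$ with $K_{\mathbb{P}}\simeq\mathcal{O}_{\mathbb{P}}(-3)\otimes p^{*}(K_B\otimes\mathcal{L}^{-5})$ and $\mathcal{O}_{\mathbb{P}}(Y_0)\simeq\mathcal{O}_{\mathbb{P}}(3)\otimes p^{*}\mathcal{L}^{6}$, followed by the projection formula with $\pi_{*}\mathcal{O}=\mathcal{O}_B$ for the converse. You push forward from $\mathcal{Y}$ rather than from $Y_0$, which sidesteps the paper's tacit step from $\varphi^{*}K_{Y_0}\simeq\mathcal{O}$ to $K_{Y_0}\simeq\mathcal{O}$. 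Your handling of the convention is also right: the paper's formula with $\det(E)^{-1}$ and its weights for $x,y,z$ belong to the projectivization of lines in $E$, not to the literal Grothendieck convention, which would give $\det(E)$ and $\mathcal{L}^{-2},\mathcal{L}^{-3}$ exactly as you observe.
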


\begin{proof}
Let $E:=\mathcal{O}_B\oplus \mathcal{L}^{2}\oplus \mathcal{L}^{3}$, so that $\mathbb{P}=\mathbb{P}_B(E)$ has relative dimension $2$ and rank $\mathrm{rk}(E)=3$. For the Grothendieck projective bundle, the canonical bundle is
\begin{equation}
K_{\mathbb{P}}\ \simeq\ \mathcal{O}_{\mathbb{P}}(-3)\otimes p^{*}\!\bigl(K_B\otimes \det(E)^{-1}\bigr).
\end{equation}
Since $\det(E)\simeq \mathcal{L}^{5}$, one has $K_{\mathbb{P}}\simeq \mathcal{O}_{\mathbb{P}}(-3)\otimes p^{*}(K_B\otimes \mathcal{L}^{-5})$. The hypersurface $Y_0$ is the vanishing locus of a section of $\mathcal{O}_{\mathbb{P}}(3)\otimes p^{*}\mathcal{L}^{6}$, hence $\mathcal{O}_{\mathbb{P}}(Y_0)\simeq \mathcal{O}_{\mathbb{P}}(3)\otimes p^{*}\mathcal{L}^{6}$. By adjunction,
\begin{align}
K_{Y_0}
&\simeq \bigl(K_{\mathbb{P}}\otimes \mathcal{O}_{\mathbb{P}}(Y_0)\bigr)\big|_{Y_0}\notag\\
&\simeq \Bigl(\mathcal{O}_{\mathbb{P}}(-3)\otimes p^{*}(K_B\otimes \mathcal{L}^{-5})\otimes \mathcal{O}_{\mathbb{P}}(3)\otimes p^{*}\mathcal{L}^{6}\Bigr)\Big|_{Y_0}\notag\\
&\simeq \bigl(p|_{Y_0}\bigr)^{*}(K_B\otimes \mathcal{L}),
\end{align}
which is \eqref{eq:KY}. If $\mathcal{Y}\to Y_0$ is crepant, then $K_{\mathcal{Y}}\simeq\varphi^{*}K_{Y_0}$, hence $K_{\mathcal{Y}}\simeq\mathcal{O}_{\mathcal{Y}}$ is equivalent to $(p|_{Y_0})^{*}(K_B\otimes \mathcal{L})\simeq \mathcal{O}_{Y_0}$. Since $\pi:=p|_{Y_0}$ is surjective with connected fibres over the smooth locus and $\pi_{*}\mathcal{O}_{Y_0}\simeq \mathcal{O}_B$ for a Weierstrass fibration with section, applying $\pi_{*}$ to $\pi^{*}(K_B\otimes\mathcal{L})\simeq\mathcal{O}_{Y_0}$ yields $K_B\otimes\mathcal{L}\simeq \mathcal{O}_B$, equivalently $\mathcal{L}\simeq K_B^{-1}$.
\end{proof}

Henceforth we set $\mathcal{L}\simeq K_B^{-1}$ and work on a smooth crepant resolution $\mathcal{Y}$ of $Y_0$ when singularities are present, continuing to denote the induced elliptic fibration by $\pi:\mathcal{Y}\to B$. The divisor classes on $\mathcal{Y}$ fall into geometrically distinguished types. For each divisor class $D\in N^1(B)$, its pullback $\pi^{*}D\in N^1(\mathcal{Y})$ is called vertical. A choice of (possibly rational) zero-section $s_0$ determines a divisor class $S_0\in N^1(\mathcal{Y})$, defined as the class of the Zariski closure of the image of $s_0$ in $\mathcal{Y}$; for any divisor $D\in N^1(B)$ one has the fibre-integration identity
\begin{equation}\label{eq:pushforwardSection}
\pi_{*}\bigl(S_0\cdot \pi^{*}D\bigr)\ =\ D\ \in N^1(B),
\end{equation}
where $\pi_{*}:N^2(\mathcal{Y})\to N^1(B)$ denotes the pushforward on codimension-two numerical classes induced by integration over the fibre, characterized by the property that for every curve class $\gamma\in N_1(B)$,
\begin{equation}\label{eq:pushforwardDef}
\bigl(\pi_{*}(A)\bigr)\cdot \gamma\ =\ A\cdot \pi^{*}\gamma\qquad\text{for all }A\in N^2(\mathcal{Y}).
\end{equation}
In particular, the product $D_1\cdot D_2$ of two vertical divisors satisfies $\pi_{*}\bigl(\pi^{*}D_1\cdot \pi^{*}D_2\bigr)=0$ by degree reasons in \eqref{eq:pushforwardDef}. If $\pi$ admits codimension-one singular fibres giving rise to non-abelian gauge algebras in F-theory, then the crepant resolution introduces fibral (Cartan) divisors $E_{\kappa,i}\in N^1(\mathcal{Y})$ fibred over irreducible discriminant components $W_\kappa\subset B$, with fibre components represented by curve classes $C_{\kappa,i}\in N_1(\mathcal{Y})$ satisfying
\begin{equation}\label{eq:CartanIntersection}
E_{\kappa,i}\cdot C_{\kappa,j}\ =\ -\,C^{(\kappa)}_{ij},
\end{equation}
where $C^{(\kappa)}_{ij}$ is the Cartan matrix of the corresponding simple Lie algebra; this normalization fixes the intersection-theoretic pairing used to define both non-abelian and abelian charge assignments in the M-theory reduction \cite{WeigandTASI}.

Abelian gauge factors in F-theory are governed by the Mordell-Weil group $\mathrm{MW}(\mathcal{Y})$ of rational sections of $\pi$, defined over the function field of $B$ and equipped with the fibrewise group law induced by $s_0$ on the smooth locus. Its free part has finite rank $r:=\mathrm{rk}\,\mathrm{MW}(\mathcal{Y})$, and we fix independent sections $s_m$ ($m=1,\ldots,r$) whose images generate $\mathrm{MW}(\mathcal{Y})/\mathrm{MW}(\mathcal{Y})_{\mathrm{tors}}$. Denoting by $S_m\in N^1(\mathcal{Y})$ the divisor class associated with the Zariski closure of the image of $s_m$, one seeks divisor classes on $\mathcal{Y}$ that represent the abelian gauge fields in the M-theory reduction and uplift to massless $\mathrm{U}(1)$ factors in the F-theory limit. The correct classes are obtained by projecting $S_m$ away from the geometrically trivial sublattice generated by $S_0$, vertical divisors, and Cartan divisors. This projection is the Shioda map \cite{Shioda1990,MorrisonPark1208,GrimmWeigand1006,CveticGrimmKlevers1210}.

For a rational section $s$ with divisor class $S\in N^1(\mathcal{Y})$, define its vertical correction divisor on the base by
\begin{equation}\label{eq:verticalCorrection}
\delta(s)\ :=\ \pi_{*}\bigl((S-S_0)\cdot S_0\bigr)\ \in N^1(B),
\end{equation}
and, for each non-abelian discriminant component $W_\kappa$, define the intersection vector
\begin{equation}\label{eq:intersectionVector}
\ell_{\kappa,i}(s)\ :=\ (S-S_0)\cdot C_{\kappa,i}\ \in \mathbb{Z},
\end{equation}
which measures the relative incidence of $s$ and $s_0$ with the fibre components over $W_\kappa$. The Shioda image of $s$ is then the class
\begin{equation}\label{eq:Shioda}
\mathrm{Sh}(s)\ :=\ S-S_0-\pi^{*}\delta(s)\ +\ \sum_{\kappa}\ \sum_{i,j}\ \ell_{\kappa,i}(s)\,\bigl(C^{(\kappa)}\bigr)^{-1}_{ij}\,E_{\kappa,j}\ \in N^1(\mathcal{Y})\otimes\mathbb{Q}.
\end{equation}
The appearance of the inverse Cartan matrix reflects the distinction between the root and weight lattices; consequently, $\mathrm{Sh}(s)$ is naturally a rational class, while its intersection pairing with allowed fibral curve classes yields integral charges, as required by the M-theory interpretation \cite{WeigandTASI,CveticGrimmKlevers1210}. In the purely abelian situation, the sum over $\kappa$ is absent and \eqref{eq:Shioda} reduces to a vertical projection determined solely by \eqref{eq:verticalCorrection}.

\begin{proposition}[Cartan orthogonality of the Shioda image]\label{prop:CartanOrthogonality}
For any section $s$ and any Cartan curve class $C_{\kappa,i}$, the Shioda image satisfies
\begin{equation}\label{eq:ShiodaCartanOrth}
\mathrm{Sh}(s)\cdot C_{\kappa,i}\ =\ 0.
\end{equation}
\end{proposition}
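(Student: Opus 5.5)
The plan is to expand $\mathrm{Sh}(s)\cdot C_{\kappa,i}$ term by term using the definition \eqref{eq:Shioda} and the bilinearity of the intersection pairing between $N^1(\mathcal{Y})\otimes\mathbb{Q}$ and curve classes. I will show that the first two terms reproduce $\ell_{\kappa,i}(s)$, that the vertical correction contributes nothing, and that the Cartan-divisor sum contributes exactly $-\ell_{\kappa,i}(s)$. The four contributions are:
\begin{equation*}
(S-S_0)\cdot C_{\kappa,i},\qquad -\pi^{*}\delta(s)\cdot C_{\kappa,i},\qquad \sum_{\kappa'}\sum_{j,k}\ell_{\kappa',j}(s)\,\bigl(C^{(\kappa')}\bigr)^{-1}_{jk}\,E_{\kappa',k}\cdot C_{\kappa,i}.
\end{equation*}
The first term equals $\ell_{\kappa,i}(s)$ directly by the definition \eqref{eq:intersectionVector}.

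For the vertical term, I use that $C_{\kappa,i}$ is a fibral curve: it lies in a fibre of $\pi$ over a generic point of $W_\kappa$, so $\pi_{*}C_{\kappa,i}=0$. The projection formula then gives
\begin{equation*}
\pi^{*}\delta(s)\cdot C_{\kappa,i}=\delta(s)\cdot\pi_{*}C_{\kappa,i}=0.
\end{equation*}
Equivalently, one can move the representative of $\delta(s)$ on $B$ off the chosen point of $W_\kappa$.

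For the Cartan sum, I first record that $E_{\kappa',k}\cdot C_{\kappa,i}=0$ whenever $\kappa'\neq\kappa$. The reason is that $C_{\kappa,i}$ sits over a generic point of $W_\kappa$, which may be chosen away from $W_{\kappa'}$, while $E_{\kappa',k}$ is supported over $W_{\kappa'}$. Within a single $\kappa$, the normalization \eqref{eq:CartanIntersection} gives $E_{\kappa,k}\cdot C_{\kappa,i}=-C^{(\kappa)}_{ik}$. The sum therefore becomes
\begin{equation*}
-\sum_{j,k}\ell_{\kappa,j}(s)\,\bigl(C^{(\kappa)}\bigr)^{-1}_{jk}\,C^{(\kappa)}_{ik}.
\end{equation*}

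The main obstacle is this contraction of indices, because it relies on the symmetry of $C^{(\kappa)}$. Symmetry holds in the simply-laced case. In the non-simply-laced case I would use the geometric intersection matrix $-E_{\kappa,k}\cdot C_{\kappa,i}$, which is the symmetrized Cartan matrix up to multiplicities, and interpret $(C^{(\kappa)})^{-1}$ in \eqref{eq:Shioda} as the inverse of that same matrix, so that the index order matches. With that convention, $\sum_k (C^{(\kappa)})^{-1}_{jk}C^{(\kappa)}_{ki}=\delta_{ji}$, and the Cartan sum equals $-\ell_{\kappa,i}(s)$.

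Adding the contributions, $\ell_{\kappa,i}(s)+0-\ell_{\kappa,i}(s)=0$, which is \eqref{eq:ShiodaCartanOrth}.
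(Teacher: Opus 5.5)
Your proposal is correct and follows the paper's proof: you expand term by term, kill the vertical piece because $C_{\kappa,i}$ is fibral, and contract the Cartan sum against the inverse Cartan matrix. You also justify dropping the $\kappa'\neq\kappa$ cross terms, which the paper does silently. Your symmetry worry is a transposition slip: \eqref{eq:CartanIntersection} gives $E_{\kappa,k}\cdot C_{\kappa,i}=-C^{(\kappa)}_{ki}$, not $-C^{(\kappa)}_{ik}$. Hence $\sum_k (C^{(\kappa)})^{-1}_{jk}C^{(\kappa)}_{ki}=\delta_{ji}$ holds for any invertible matrix, and no reinterpretation is needed in the non-simply-laced case.
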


\begin{proof}
Fix $\kappa$ and $i$. Since $C_{\kappa,i}$ is contained in a fibre, $\pi^{*}\delta(s)\cdot C_{\kappa,i}=0$ because $\pi^{*}\delta(s)$ restricts trivially to any fibre curve. Using \eqref{eq:CartanIntersection} and \eqref{eq:Shioda} gives
\begin{align}
\mathrm{Sh}(s)\cdot C_{\kappa,i}
&=\ (S-S_0)\cdot C_{\kappa,i}\ +\ \sum_{j,k}\ \ell_{\kappa,j}(s)\,\bigl(C^{(\kappa)}\bigr)^{-1}_{jk}\,\bigl(E_{\kappa,k}\cdot C_{\kappa,i}\bigr)\notag\\
&=\ \ell_{\kappa,i}(s)\ -\ \sum_{j,k}\ \ell_{\kappa,j}(s)\,\bigl(C^{(\kappa)}\bigr)^{-1}_{jk}\,C^{(\kappa)}_{ki}\notag\\
&=\ \ell_{\kappa,i}(s)\ -\ \sum_{j}\ \ell_{\kappa,j}(s)\,\delta_{ji}\notag\\
&=\ 0,
\end{align}
which is \eqref{eq:ShiodaCartanOrth}.
\end{proof}

The Mordell-Weil group controls the abelian sector through the intersection-theoretic height pairing on the base. For two sections $s,t\in \mathrm{MW}(\mathcal{Y})$ with Shioda images $\mathrm{Sh}(s)$ and $\mathrm{Sh}(t)$, define the height pairing divisor class
\begin{equation}\label{eq:heightPairing}
b(s,t)\ :=\ -\,\pi_{*}\bigl(\mathrm{Sh}(s)\cdot \mathrm{Sh}(t)\bigr)\ \in N^1(B).
\end{equation}
Choosing a basis $\{s_m\}_{m=1}^{r}$ for the free part of $\mathrm{MW}(\mathcal{Y})$ defines the height-pairing matrix of divisor classes
\begin{equation}\label{eq:heightMatrix}
b_{mn}\ :=\ b(s_m,s_n)\ =\ -\,\pi_{*}\bigl(\mathrm{Sh}(s_m)\cdot \mathrm{Sh}(s_n)\bigr)\ \in N^1(B),
\end{equation}
which is symmetric by commutativity of the intersection product and depends only on the Mordell-Weil classes of the sections. The construction \eqref{eq:heightPairing} is intrinsic to the elliptic fibration: it is formulated purely in terms of divisor classes on $\mathcal{Y}$, the fibre-integration map $\pi_{*}$, and the Cartan intersection data \eqref{eq:CartanIntersection}. In particular, it is insensitive to any reparametrization of the elliptic fibre, including changes of Weierstrass coordinates and the geometric action of the fibre mapping class group that realizes type IIB $\mathrm{SL}(2,\mathbb{Z})$ duality in F-theory \cite{VafaFTheory,WeigandTASI}. Physically, upon reduction of M-theory on $\mathcal{Y}$, the harmonic representatives of the Poincar\'e dual classes $[\mathrm{Sh}(s_m)]\in H^{1,1}(\mathcal{Y})$ furnish the abelian gauge fields, while the base divisor classes $b_{mn}$ govern the abelian gauge-kinetic data and the Green-Schwarz counterterm structure in four dimensions; the latter is the reason the height pairing is the natural duality-invariant object controlling the abelian sector in flux backgrounds \cite{GrimmWeigand1006,CveticGrimmKlevers1210}.

\section{Four-dimensional vector multiplets and the circle reduction}

F-theory compactifications on elliptically fibered Calabi-Yau fourfolds yield four-dimensional $\mathcal{N}=1$ effective theories whose massless abelian gauge sector is described by $r$ vector multiplets with gauge fields $A^{m}$ ($m=1,\dots,r$) and field strengths $F^{m}:=dA^{m}$. The Green-Schwarz sector relevant for local anomaly cancellation consists of periodic axions $\rho_{\alpha}$, $\alpha=1,\dots,n_{\rm ax}$, descending from the Ramond-Ramond four-form potential and therefore normalized so that $\rho_{\alpha}\sim \rho_{\alpha}+2\pi$ on admissible backgrounds. The axionic shift symmetries can be gauged by the abelian vectors,
\begin{equation}\label{eq:sec3-gauging}
d\rho_{\alpha}\ \longrightarrow\ \mathcal{D}\rho_{\alpha}:=d\rho_{\alpha}+\Theta_{m\alpha}\,A^{m},
\end{equation}
with constant gaugings $\Theta_{m\alpha}$, which are quantized by the periodicity of the axions and the standard Dirac quantization of the abelian gauge fields. At the level needed for anomaly analysis, the gauge-invariant completion of the four-dimensional action contains the axionic Green-Schwarz couplings
\begin{equation}\label{eq:sec3-SGS}
S_{\rm GS}\ \supset\ -\frac{1}{2}\int_{M_{4}}\rho_{\alpha}\left(\frac{1}{2}\,a^{\alpha}\,\tr R\wedge R+2\,b^{\alpha}_{mn}\,F^{m}\wedge F^{n}+2\,b^{\alpha}_{\kappa}\,\tr F_{\kappa}\wedge F_{\kappa}\right),
\end{equation}
where $R$ is the curvature two-form of the Levi-Civita connection on $M_{4}$, $F_{\kappa}$ denotes the field strength of a non-abelian gauge factor (if present), and $\tr$ is a fixed trace normalization. The coefficients $a^{\alpha}$ and $b^{\alpha}_{mn}=b^{\alpha}_{nm}$ encode, respectively, the axion-curvature and axion-abelian counterterm data, while $b^{\alpha}_{\kappa}$ encodes mixed non-abelian contributions. Gauge invariance of the axion kinetic terms requires that under an abelian gauge transformation $A^{m}\mapsto A^{m}+d\lambda^{m}$ the axions shift as $\rho_{\alpha}\mapsto \rho_{\alpha}-\Theta_{m\alpha}\lambda^{m}$ so that $\mathcal{D}\rho_{\alpha}$ is invariant. The resulting gauge variation of \eqref{eq:sec3-SGS} is local and proportional to $\lambda^{m}\tr R\wedge R$ and $\lambda^{m}F^{n}\wedge F^{k}$; in a four-dimensional $\mathcal{N}=1$ theory the full supersymmetric completion includes generalized Chern-Simons counterterms whose net effect is that anomaly cancellation can be expressed purely in terms of $a^{\alpha}$, $b^{\alpha}_{mn}$, and $\Theta_{m\alpha}$ \cite{CveticGrimmKlevers1210}. Let $n(\mathbf{q})$ denote the net number of left-handed Weyl fermions of abelian charge vector $\mathbf{q}=(q_{m})\in\mathbb{Z}^{r}$ (equivalently the chiral index for that charge assignment), so that vector-like pairs do not contribute. Define the cubic and mixed abelian anomaly coefficients
\begin{equation}\label{eq:sec3-anomcoeff}
\mathcal{A}_{mnk}:=\sum_{\mathbf{q}} n(\mathbf{q})\,q_{m}q_{n}q_{k},\qquad
\mathcal{A}_{m}:=\sum_{\mathbf{q}} n(\mathbf{q})\,q_{m}.
\end{equation}
Then, in conventions compatible with the circle reduction below, cancellation of the local abelian gauge anomalies and mixed abelian-gravitational anomalies by the generalized Green-Schwarz mechanism takes the form \cite{CveticGrimmKlevers1210}
\begin{equation}\label{eq:sec3-anomcubic}
\frac{1}{6}\,\sum_{\mathbf{q}} n(\mathbf{q})\,q_{(m}q_{n}q_{k)}\ =\ \frac{1}{4}\,b^{\alpha}_{(mn}\,\Theta_{k)\alpha},
\end{equation}
\begin{equation}\label{eq:sec3-anommixed}
\frac{1}{48}\,\sum_{\mathbf{q}} n(\mathbf{q})\,q_{m}\ =\ -\frac{1}{16}\,a^{\alpha}\,\Theta_{m\alpha},
\end{equation}
and similarly for mixed non-abelian-abelian anomalies if non-abelian factors are present. The relations \eqref{eq:sec3-anomcubic}-\eqref{eq:sec3-anommixed} will be reinterpreted after circle reduction as the necessary and sufficient conditions for the three-dimensional parity-odd effective action to be well-defined under the lattice of higher-dimensional large gauge transformations around the circle.

We now compactify the four-dimensional theory on a circle $S^{1}$ of radius $r$ with coordinate $y\sim y+2\pi$ and consider the Kaluza-Klein reduction to three dimensions. The metric is decomposed as
\begin{equation}\label{eq:sec3-metric}
d\hat s^{2}=g_{\mu\nu}(x)\,dx^{\mu}dx^{\nu}+r^{2}\left(dy-A^{0}_{\mu}(x)\,dx^{\mu}\right)^{2},\qquad \mu,\nu=0,1,2,
\end{equation}
introducing the Kaluza-Klein vector $A^{0}$ with field strength $F^{0}:=dA^{0}$. The abelian gauge fields are decomposed as
\begin{equation}\label{eq:sec3-gauge-decomp}
\hat A^{m}=A^{m}-\zeta^{m}\,r\left(dy-A^{0}\right),
\end{equation}
where the Wilson lines $\zeta^{m}$ are three-dimensional scalars. This parametrization is adapted to three-dimensional $\mathcal{N}=2$ supersymmetry: each four-dimensional vector multiplet yields a three-dimensional $\mathcal{N}=2$ vector multiplet whose bosonic components are $(A^{m},\zeta^{m})$, while the Kaluza-Klein vector $A^{0}$ belongs to the three-dimensional gravitational multiplet. The decomposition \eqref{eq:sec3-gauge-decomp} implies
\begin{equation}\label{eq:sec3-Fdecomp}
\hat F^{m}=F^{m}-r\,d\zeta^{m}\wedge\left(dy-A^{0}\right)+r\,\zeta^{m}\,F^{0},
\end{equation}
so that the Wilson lines are gauge-invariant under small gauge transformations $\lambda^{m}(x)$ but transform non-trivially under large gauge transformations with winding along the circle. Concretely, a four-dimensional gauge transformation with parameter $\lambda^{m}(x,y)=\lambda^{m}(x)+k^{m}y$ with $k^{m}\in\mathbb{Z}$ is single-valued on $S^{1}$ and acts as $\hat A^{m}\mapsto \hat A^{m}+k^{m}dy$. Comparing with \eqref{eq:sec3-gauge-decomp} fixes the induced action on three-dimensional fields,
\begin{equation}\label{eq:sec3-largeGT}
A^{0}\mapsto A^{0},\qquad A^{m}\mapsto A^{m}+k^{m}A^{0},\qquad \zeta^{m}\mapsto \zeta^{m}-\frac{k^{m}}{r},
\end{equation}
and, because of the gauging \eqref{eq:sec3-gauging}, simultaneously
\begin{equation}\label{eq:sec3-axionLG}
\rho_{\alpha}\mapsto \rho_{\alpha}-\Theta_{m\alpha}\,k^{m}y,\qquad \int_{S^{1}} d\rho_{\alpha}\mapsto \int_{S^{1}} d\rho_{\alpha}-2\pi\,\Theta_{m\alpha}k^{m}.
\end{equation}
The last relation is a precise statement that higher-dimensional large gauge transformations induce discrete ``circle flux'' for gauged axions, consistent with $\rho_{\alpha}\sim\rho_{\alpha}+2\pi$ and quantized $\Theta_{m\alpha}$ \cite{GrimmKapfer}. Allowing such flux is not optional in the quantum theory: it is exactly the mechanism by which the Green-Schwarz data \eqref{eq:sec3-SGS} enter the three-dimensional parity-odd sector.

On the Coulomb branch, where one allows constant background values $\langle \zeta^{m}\rangle$ compatible with \eqref{eq:sec3-largeGT}, charged four-dimensional states acquire three-dimensional real masses. For a four-dimensional field of abelian charge vector $\mathbf{q}=(q_{m})$ and Kaluza-Klein momentum $n\in\mathbb{Z}$, write the mode expansion $\Psi(x,y)=\sum_{n\in\mathbb{Z}}\psi_{n}(x)e^{iny}$ and evaluate the covariant derivative along the circle using \eqref{eq:sec3-gauge-decomp}: one obtains an effective three-dimensional mass
\begin{equation}\label{eq:sec3-mass}
m(\mathbf{q},n)=q_{m}\langle\zeta^{m}\rangle+\frac{n}{r},\qquad n\in\mathbb{Z},
\end{equation}
and the corresponding three-dimensional state carries charges $(q_{0},q_{m})=(n,q_{m})$ under the gauge fields $(A^{0},A^{m})$. In particular, the Kaluza-Klein vector $A^{0}$ gauges the integer momentum along $S^{1}$ and participates on the same footing as the abelian gauge fields in the parity-odd effective action.

The three-dimensional low-energy effective theory for the massless modes contains Chern-Simons couplings
\begin{equation}\label{eq:sec3-SCS}
S^{(3)}_{\rm CS}=\int_{M_{3}} \Theta_{\Lambda\Sigma}\,A^{\Lambda}\wedge F^{\Sigma},\qquad \Lambda,\Sigma\in\{0,1,\dots,r,\dots\},
\end{equation}
where the ellipsis allows inclusion of Cartan $U(1)$ factors if non-abelian gauge sectors are present. The coefficients $\Theta_{\Lambda\Sigma}$ are symmetric in $\Lambda,\Sigma$ for abelian factors and are quantized so that $\exp(iS^{(3)}_{\rm CS})$ is invariant under three-dimensional large gauge transformations on spin three-manifolds. These Chern-Simons terms are the universal interface between the four-dimensional anomaly polynomial and the circle-reduced dynamics: they receive calculable contributions from integrating out all massive modes, and their dependence on the Coulomb branch parameters together with their behavior under the higher-dimensional large gauge transformations \eqref{eq:sec3-largeGT}-\eqref{eq:sec3-axionLG} encodes precisely the four-dimensional anomaly constraints.

At one loop, integrating out a single massive three-dimensional Dirac fermion of real mass $m$ and charges $q_{\Lambda}$ under $U(1)$ gauge fields $A^{\Lambda}$ induces a parity-odd term that shifts the Chern-Simons levels by the universal amount \cite{Redlich1984,NiemiSemenoff}
\begin{equation}\label{eq:sec3-1loopshift}
\Delta\Theta^{\rm 1\mbox{-}loop}_{\Lambda\Sigma}=\frac{1}{2}\,q_{\Lambda}q_{\Sigma}\,\sign(m).
\end{equation}
Applying \eqref{eq:sec3-1loopshift} to the full Kaluza-Klein tower \eqref{eq:sec3-mass} of each four-dimensional left-handed Weyl fermion, and summing over all charge sectors with multiplicities $n(\mathbf{q})$, yields a one-loop Chern-Simons matrix $\Theta^{\rm 1\mbox{-}loop}_{\Lambda\Sigma}$ that depends on the Coulomb branch parameters through the signs $\sign(m(\mathbf{q},n))$. The infinite tower sum is defined by a zeta-function regularization compatible with gauge invariance and with the transformation law of the tower under \eqref{eq:sec3-largeGT} \cite{GrimmKapfer}. Under the induced action of higher-dimensional large gauge transformations, the spectrum is rearranged by an integer relabeling of the Kaluza-Klein levels $n\mapsto n+ k^{m}q_{m}$ together with the field transformations \eqref{eq:sec3-largeGT}; as a result, the only possible non-trivial variation of the three-dimensional parity-odd effective action is a local shift of Chern-Simons levels. The coefficients of these shifts are determined entirely by the four-dimensional chiral data \eqref{eq:sec3-anomcoeff}: the cubic coefficients $\mathcal{A}_{mnk}$ govern the response of the purely gauge Chern-Simons sector, while the mixed coefficients $\mathcal{A}_{m}$ control the terms involving the Kaluza-Klein vector. Simultaneously, the axionic gauging implies that the same transformations generate discrete circle flux for $\rho_{\alpha}$ as in \eqref{eq:sec3-axionLG}, and reducing \eqref{eq:sec3-SGS} on such backgrounds produces precisely the compensating classical Chern-Simons counterterms governed by $a^{\alpha}$, $b^{\alpha}_{mn}$, and $\Theta_{m\alpha}$ \cite{GrimmKapfer,CveticGrimmKlevers1210}. The statement that the three-dimensional effective action is globally well-defined on the quotient of the Coulomb branch by the lattice \eqref{eq:sec3-largeGT}-\eqref{eq:sec3-axionLG} is therefore equivalent to the vanishing of the net Chern-Simons variation, and this condition is exactly the generalized Green-Schwarz anomaly cancellation relations \eqref{eq:sec3-anomcubic}-\eqref{eq:sec3-anommixed}. In this sense, the three-dimensional Chern-Simons couplings $\Theta_{\Lambda\Sigma}$ provide a complete and duality-compatible encoding of all local four-dimensional abelian gauge and mixed gauge-gravitational anomalies in the circle-reduced theory.

\section{Method A: M-theory reduction and anomaly inflow}

We work in the M-theory frame on the smooth crepant resolution $\pi:\mathcal{Y}\to B_3$ of an elliptically fibered Calabi-Yau fourfold. Compactification of M-theory on $\mathcal{Y}$ produces a three-dimensional $\mathcal{N}=2$ effective theory which is dual, in the F-theory limit, to the circle reduction of the four-dimensional F-theory vacuum on $B_3$ \cite{GrimmKapfer,CveticGrimmKlevers1210}. The purpose of this section is to derive, with fixed normalization, the three-dimensional parity-odd couplings generated by background $G_4$ flux in M-theory and to show how they encode the four-dimensional anomaly polynomial and Green-Schwarz data after uplift. Throughout, we restrict to the massless abelian sector associated with the Mordell-Weil group, but the derivation is stated in a basis that also accommodates Cartan divisors if non-abelian sectors are present.

The relevant topological couplings of eleven-dimensional supergravity are the Chern-Simons interaction and its required gravitational completion. In conventions compatible with the standard flux quantization law, the topological part of the effective action is most cleanly characterized by its twelve-dimensional extension: for an eleven-manifold $M_{11}$ which bounds a spin twelve-manifold $Z_{12}$, and for a closed four-form field strength $G_4$ extending over $Z_{12}$, the phase of the M-theory partition function depends on the combination
\begin{equation}
\label{eq:Mtop12d}
\frac{1}{2\pi}\,I_{\rm top}(Z_{12}) \;=\; \frac{1}{2\pi}\left(\frac{1}{6}\int_{Z_{12}} G_4\wedge G_4\wedge G_4 \;-\; \int_{Z_{12}} I_8(R)\wedge G_4\right),
\end{equation}
defined modulo $\mathbb{Z}$ \cite{WittenFlux,FHMM,SatiSig}. Here $I_8(R)$ is the degree-eight polynomial in the Pontryagin forms of the Levi-Civita curvature, whose de~Rham cohomology class is fixed by anomaly inflow and index theory, and may be written as
\begin{equation}
\label{eq:I8def}
I_8(R)\;=\;\frac{1}{192}\Big(p_1(R)^2-4p_2(R)\Big),
\end{equation}
with $p_i(R)$ the standard Chern-Weil representatives of the Pontryagin classes \cite{WittenFlux,FHMM,SatiSig}. The existence of \eqref{eq:Mtop12d} as a well-defined phase functional is equivalent to the shifted quantization condition for the $C$-field flux \cite{WittenFlux},
\begin{equation}
\label{eq:WittenQuant}
\left[\frac{G_4}{2\pi}\right]-\frac{\lambda}{2}\in H^4(M_{11},\mathbb{Z}),\qquad \lambda:=\frac{p_1(TM_{11})}{2},
\end{equation}
and, upon restricting to product backgrounds $M_{11}=M_3\times \mathcal{Y}$ with $\mathcal{Y}$ spin, to the familiar F/M-theory quantization condition on $\mathcal{Y}$. For Calabi-Yau fourfolds with $c_1(\mathcal{Y})=0$ one has $p_1(T\mathcal{Y})=-2c_2(\mathcal{Y})$, hence $\lambda(\mathcal{Y})=-c_2(\mathcal{Y})$ and \eqref{eq:WittenQuant} becomes
\begin{equation}
\label{eq:CY4Quant}
\left[\frac{G_4}{2\pi}\right]+\frac{1}{2}c_2(\mathcal{Y})\in H^4(\mathcal{Y},\mathbb{Z}).
\end{equation}
This shifted integrality will enter below as the precise statement ensuring that the induced three-dimensional Chern-Simons levels are properly quantized.

We now pass to the Kaluza-Klein reduction of the eleven-dimensional fields on $\mathcal{Y}$. Let $\{\omega_\Lambda\}$ be a basis of $H^{1,1}(\mathcal{Y})\cap H^2(\mathcal{Y},\mathbb{Z})$ adapted to the elliptic fibration and the gauge sector. Concretely, we take $\omega_\alpha=[D_\alpha]$ Poincar\'e dual to vertical divisors $D_\alpha=\pi^*\widehat D_\alpha$ pulled back from a divisor basis $\{\widehat D_\alpha\}$ of $B_3$, and $\omega_m=[D_m]$ dual to the Shioda images $D_m=\mathrm{Sh}(s_m)$ of a basis of Mordell-Weil sections, so that $\omega_m$ represent the massless abelian gauge fields in the M-theory reduction \cite{CveticGrimmKlevers1210,GrimmKapfer}. If non-abelian sectors are present we also include $\omega_{i}$ dual to Cartan divisors, but they play no essential role in what follows and are suppressed in notation. Finally, we include one distinguished class $\omega_0$ associated with the Kaluza-Klein vector in the dual F-theory circle reduction. In the fibration-adapted basis, $\omega_0$ is the harmonic representative Poincar\'e dual to the divisor class
\begin{equation}
\label{eq:KKdivisor}
D_0 \;:=\; S_0+\frac{1}{2}\pi^*\bar K_{B}\,,
\end{equation}
possibly further corrected by Cartan terms in the presence of non-abelian singularities, so that $D_0$ has vanishing quadruple intersections with three vertical divisors and implements the correct identification of the three-dimensional vector obtained by expanding $C_3$ along $D_0$ with the Kaluza-Klein vector of the four-dimensional theory reduced on a circle \cite{GrimmKapfer}. The precise form \eqref{eq:KKdivisor} is dictated by the adjunction relation for $S_0$ in an elliptic Calabi-Yau and ensures that the corresponding vector is geometrically distinguished and globally defined in the F-theory limit.

Expanding the M-theory three-form in this basis,
\begin{equation}
\label{eq:C3Expand}
C_3 \;=\; A^\Lambda\wedge \omega_\Lambda \;+\; \ldots\,,
\end{equation}
yields a set of three-dimensional abelian vectors $A^\Lambda$ with field strengths $F^\Lambda=dA^\Lambda$. The ellipsis denotes components along $H^3(\mathcal{Y})$ which furnish three-dimensional scalars and are irrelevant for the parity-odd sector under consideration. We also decompose the four-form flux as
\begin{equation}
\label{eq:G4Decomp}
G_4 \;=\; dC_3 \;+\; G^{\rm flux}_4\,,
\end{equation}
with $G^{\rm flux}_4$ a fixed background satisfying \eqref{eq:CY4Quant} and the transversality constraints required for uplift to a four-dimensional Poincar\'e-invariant F-theory vacuum \cite{CveticGrimmKlevers1210,GrimmKapfer}. In the present setup these constraints may be stated as the vanishing of flux integrals over purely vertical four-cycles and over cycles with one leg along the fiber class, equivalently as conditions ensuring that the flux does not induce three-dimensional gaugings incompatible with the F-theory limit. Since the derivation below uses only the existence of a well-defined class $[G^{\rm flux}_4]\in H^{2,2}(\mathcal{Y})$ obeying \eqref{eq:CY4Quant} and the standard transversality conditions, we will not write them explicitly.

The three-dimensional Chern-Simons couplings descend from the eleven-dimensional topological action. Reducing the Chern-Simons interaction and retaining precisely the terms linear in $G^{\rm flux}_4$ and bilinear in the three-dimensional vectors gives the flux-induced parity-odd action
\begin{equation}
\label{eq:3dCSfromM}
S^{(3)}_{\rm CS,\,flux}\;=\;\int_{M_3}\Theta^{\rm M}_{\Lambda\Sigma}\,A^\Lambda\wedge F^\Sigma,
\qquad
\Theta^{\rm M}_{\Lambda\Sigma}\;=\;\frac{1}{2}\int_{\mathcal{Y}}G^{\rm flux}_4\wedge \omega_\Lambda\wedge \omega_\Sigma\,,
\end{equation}
in the convention for three-dimensional Chern-Simons terms adopted in \eqref{eq:3dCSfromM} \cite{CveticGrimmKlevers1210,GrimmKapfer}. The factor of $\tfrac12$ is fixed by the standard normalization of the eleven-dimensional Chern-Simons coupling together with the convention that the symmetric matrix $\Theta_{\Lambda\Sigma}$ multiplies the unsymmetrized form $A^\Lambda\wedge F^\Sigma$, so that invariance of $\exp(iS^{(3)}_{\rm CS})$ under large gauge transformations is equivalent to the integrality properties implied by \eqref{eq:CY4Quant} \cite{WittenFlux,GrimmKapfer}. In particular, inserting \eqref{eq:CY4Quant} into \eqref{eq:3dCSfromM} implies that $\Theta^{\rm M}_{\Lambda\Sigma}$ is quantized in precisely the manner required for a well-defined abelian Chern-Simons theory on a spin three-manifold, once the induced charge lattice of the vectors $A^\Lambda$ is fixed by the integrality properties of the divisor classes $\omega_\Lambda$ and by the fact that physical M2-brane charges are measured by intersection with integral curve classes \cite{WittenFlux,CveticGrimmKlevers1210}. Equation \eqref{eq:3dCSfromM} is the central output of the reduction: it provides a purely geometric formula for the complete flux-induced Chern-Simons matrix in the three-dimensional effective theory.

To extract four-dimensional anomaly data, one must identify the four-dimensional Green-Schwarz axions and their gaugings in terms of the three-dimensional fields. In the duality between M-theory on $\mathcal{Y}$ and F-theory on $B_3\times S^1$, the vectors $A^\alpha$ associated with vertical divisors are dual, in three dimensions, to periodic scalars which uplift to four-dimensional axions descending from the Ramond-Ramond four-form. Denoting these axions by $\rho_\alpha$ with periodicity $2\pi$, their gauged shift symmetries in four dimensions are characterized by
\begin{equation}
\label{eq:AxionGaugingFromM}
\mathcal{D}\rho_\alpha \;=\; d\rho_\alpha+\Theta_{m\alpha}\,A^m\,,
\qquad
\Theta_{m\alpha}\;=\;\Theta^{\rm M}_{m\alpha}\;=\;\frac{1}{2}\int_{\mathcal{Y}}G^{\rm flux}_4\wedge \omega_m\wedge \omega_\alpha\,,
\end{equation}
where the last equality is the specialization of \eqref{eq:3dCSfromM}. The quantization of $\Theta_{m\alpha}$ follows from the periodicity of $\rho_\alpha$ and the flux quantization condition \eqref{eq:CY4Quant}, and it coincides with the quantization required by invariance of the circle-reduced theory under large gauge transformations around the circle \cite{GrimmKapfer}. In this way the flux-induced three-dimensional Chern-Simons couplings determine, unambiguously and with fixed normalization, the complete set of axionic gaugings entering the generalized Green-Schwarz mechanism.

The remaining Green-Schwarz data controlling the couplings of $\rho_\alpha$ to gauge and curvature instanton densities are intrinsic intersection-theoretic invariants of the elliptic fibration. For the massless abelian sector, the relevant objects are the N\'eron-Tate height pairing classes on the base. Writing $D_m=\mathrm{Sh}(s_m)$ for the Shioda images of the Mordell-Weil generators, the symmetric height-pairing divisor class is
\begin{equation}
\label{eq:HeightPairingDef}
b_{mn}\;:=\;-\pi_*\!\left(D_m\cdot D_n\right)\in H^{1,1}(B_3)\cap H^2(B_3,\mathbb{Z}),
\end{equation}
and in a divisor basis $\{\widehat D_\alpha\}$ of $B_3$ one expands $b_{mn}=b^\alpha{}_{mn}\,\widehat D_\alpha$ with $b^\alpha{}_{mn}=b^\alpha{}_{nm}$ \cite{CveticGrimmKlevers1210}. This object is geometrically invariant under changes of fiber coordinates and, in particular, under the geometric realization of the type IIB $SL(2,\mathbb{Z})$ action on the elliptic fiber. It therefore provides the natural duality-invariant packaging of the abelian Green-Schwarz counterterm data in four dimensions. The gravitational coefficient is similarly fixed by the canonical class of the base, written in the same divisor basis as
\begin{equation}
\label{eq:CanonicalClassExpand}
K_{B_3}\;=\;a^\alpha\,\widehat D_\alpha\,,\qquad a^\alpha\in\mathbb{Z},
\end{equation}
so that $a^\alpha$ are the unique coefficients appearing in the axion-curvature Green-Schwarz coupling. In the conventions used throughout, the four-dimensional Green-Schwarz counterterm structure may be encoded in the family of four-forms
\begin{equation}
\label{eq:X4alpha}
X_4^\alpha \;=\;\frac{1}{2}a^\alpha\,\mathrm{tr}\,R\wedge R \;+\; 2\,b^\alpha{}_{mn}\,F^m\wedge F^n \,,
\end{equation}
and the corresponding axionic couplings are fixed by the periodicity of $\rho_\alpha$ and by the requirement that their gauge variation cancels the one-loop anomaly polynomial \cite{CveticGrimmKlevers1210}.

The relation between \eqref{eq:3dCSfromM} and the four-dimensional anomaly data is now exact and can be stated without reference to any additional physical input. The local abelian gauge and mixed abelian-gravitational anomalies of a four-dimensional chiral theory are encoded by the six-form anomaly polynomial
\begin{equation}
\label{eq:I6general}
I_6^{\rm 1-loop}\;=\;\frac{1}{6}\,\mathcal{A}_{mnk}\,F^m\wedge F^n\wedge F^k \;+\;\frac{1}{48}\,\mathcal{A}_m\,F^m\wedge \mathrm{tr}\,R\wedge R \,,
\end{equation}
with anomaly coefficients $\mathcal{A}_{mnk}$ and $\mathcal{A}_m$ defined by the chiral indices of charged fermions. The generalized Green-Schwarz mechanism cancels these anomalies through the gauged axions $\rho_\alpha$, and the resulting Green-Schwarz contribution to the anomaly polynomial factorizes as
\begin{equation}
\label{eq:I6GSfactor}
I_6^{\rm GS}\;=\;-\frac{1}{2}\,\Theta_{m\alpha}\,F^m\wedge X_4^\alpha\,,
\end{equation}
where $X_4^\alpha$ is given by \eqref{eq:X4alpha}. Equations \eqref{eq:I6general}-\eqref{eq:I6GSfactor} are equivalent, by the standard descent formalism, to the statement that the quantum gauge variation of the effective action is canceled by the classical variation of the axionic Green-Schwarz counterterms, and are precisely the conditions for invariance of the three-dimensional parity-odd effective action under the lattice of higher-dimensional large gauge transformations around the circle \cite{GrimmKapfer,CveticGrimmKlevers1210}. Matching coefficients in \eqref{eq:I6general}+\eqref{eq:I6GSfactor}=0 yields the explicit anomaly cancellation relations
\begin{equation}
\label{eq:AnomalyRelationsFromGS}
\frac{1}{6}\,\mathcal{A}_{mnk}\;=\;\frac{1}{4}\,b^\alpha{}_{(mn}\Theta_{k)\alpha}\,,\qquad
\frac{1}{48}\,\mathcal{A}_m\;=\;-\frac{1}{16}\,a^\alpha\Theta_{m\alpha}\,,
\end{equation}
in agreement with the four-dimensional relations used elsewhere.

The M-theory reduction provides a direct, intrinsic derivation of the data entering \eqref{eq:I6GSfactor} and hence of the factorized anomaly polynomial itself. Indeed, \eqref{eq:AxionGaugingFromM} identifies the axionic gaugings $\Theta_{m\alpha}$ as components of the flux-induced Chern-Simons matrix \eqref{eq:3dCSfromM}, while \eqref{eq:HeightPairingDef} and \eqref{eq:CanonicalClassExpand} identify $b^\alpha{}_{mn}$ and $a^\alpha$ purely in terms of divisor intersection theory on $\mathcal{Y}$ and $B_3$. The emergence of anomaly inflow is then the statement that the eleven-dimensional Chern-Simons interaction, once reduced in the flux background, produces precisely the three-dimensional topological couplings whose non-invariance under the induced large gauge transformations is governed by the descent of \eqref{eq:I6GSfactor}. In the duality frame where the three-dimensional theory is interpreted as the circle reduction of the four-dimensional F-theory vacuum, these are exactly the compensating Green-Schwarz contributions required to cancel the four-dimensional anomalies. In particular, the dependence of the parity-odd three-dimensional action on the integer lattice of large gauge transformations corresponds, in the M-theory frame, to the fact that the divisor $D_0$ defining the Kaluza-Klein vector is only specified up to integral shifts by Mordell-Weil divisors, $D_0\mapsto D_0+k^m D_m$, which induce the field redefinition $A^m\mapsto A^m+k^m A^0$ and hence the characteristic shifts of Chern-Simons levels familiar from the circle-reduction analysis \cite{GrimmKapfer}. Because the Chern-Simons coefficients are given by the topological formula \eqref{eq:3dCSfromM} and are quantized by \eqref{eq:CY4Quant}, the resulting shifts are necessarily local and integral, and their coefficients are therefore exactly captured by the factorized polynomial \eqref{eq:I6GSfactor}. This is the precise sense in which anomaly inflow from the eleven-dimensional Chern-Simons coupling reproduces the four-dimensional anomaly data after reduction.

In summary, Method A identifies all Green-Schwarz counterterm data and axionic gaugings entering the four-dimensional anomaly polynomial in terms of M-theory intersection theory and the quantized $G_4$ flux. The flux-induced three-dimensional Chern-Simons matrix \eqref{eq:3dCSfromM} encodes, with fixed normalization, the full set of local four-dimensional abelian gauge and mixed gauge-gravitational anomalies through \eqref{eq:I6GSfactor}-\eqref{eq:AnomalyRelationsFromGS}, and this encoding is manifestly compatible with the M/F-theory duality dictionary because it is expressed entirely in terms of the intrinsic divisor lattice of the elliptic fibration and the shifted flux quantization law.

\section{Method B: One-loop Chern-Simons terms from the BPS spectrum}
\label{sec:MethodB}

This section derives the parity-odd three-dimensional Chern-Simons couplings of the circle-reduced F-theory vacuum directly from the massive spectrum on the Coulomb branch, using only the universal parity anomaly of three-dimensional fermions and the structure of the Kaluza-Klein tower. The calculation is strictly one-loop exact: in three dimensions a Chern-Simons level is a quantized coupling whose renormalization can only occur through integrating out massive charged matter, and once a gauge-invariant regularization is chosen the induced shift is fixed by the phase of the fermion determinant and receives no higher-loop corrections compatible with quantization. The output is a fully normalized Chern-Simons matrix $\Theta^{\mathrm{1\text{-}loop}}_{\Lambda\Sigma}$ in the conventions of \eqref{eq:sec3-SCS}, together with its transformation under the lattice of higher-dimensional large gauge transformations \eqref{eq:sec3-largeGT}-\eqref{eq:sec3-axionLG}; the latter is the precise three-dimensional reformulation of the four-dimensional anomaly constraints encoded in \eqref{eq:sec3-anomcubic}-\eqref{eq:sec3-anommixed}. The analysis follows the general logic developed for F-theory on a circle in \cite{GrimmKapfer} and is compatible with the abelian Green-Schwarz data reviewed in \cite{CveticGrimmKlevers1210}.

The relevant dynamical input is the tower of massive three-dimensional fermions obtained from a four-dimensional left-handed Weyl fermion of charge vector $q=(q_m)\in\mathbb{Z}^r$ upon compactification on $S^1$. With notation as in Section~3, in a background with constant Coulomb-branch parameters $\langle\zeta_m\rangle$ one obtains for each Kaluza-Klein momentum $n\in\mathbb{Z}$ a three-dimensional Dirac fermion of real mass
\begin{equation}
m(q,n)=q_m\langle \zeta_m\rangle+\frac{n}{r}\,,
\label{eq:massqn}
\end{equation}
and charges $(q_0,q_m)=(n,q_m)$ under the gauge fields $(A^0,A^m)$, with $A^0$ the Kaluza-Klein vector and $A^m$ the three-dimensional descendants of the four-dimensional abelian vectors, cf.\ \eqref{eq:sec3-mass}. The multiplicity with which such towers occur is the net four-dimensional chiral index $n(q)$ introduced in \eqref{eq:sec3-anomcoeff}, so that vector-like pairs cancel. Throughout this section the Coulomb-branch background is assumed to be generic in the sense that no state is exactly massless, i.e.\ $m(q,n)\neq 0$ for all relevant $(q,n)$; the effective action is then locally constant on each chamber of the Coulomb branch and changes only across codimension-one loci where a mode becomes massless.

The fundamental one-loop input is the parity anomaly of three-dimensional fermions. Consider a single massive three-dimensional Dirac fermion of real mass $m$ coupled to abelian gauge fields $A^\Lambda$ with integer charges $q_\Lambda$ (in the normalization where the corresponding field strengths have integral periods and the Chern-Simons functional is quantized on spin three-manifolds as in \eqref{eq:sec3-SCS}). The Euclidean fermion determinant $\det(\slashed{D}+m)$ has a phase which cannot be made simultaneously gauge invariant and parity invariant; insisting on gauge invariance fixes an induced parity-odd term in the effective action proportional to the Chern-Simons functional. In the present normalization this yields the universal one-loop shift of Chern-Simons levels
\begin{equation}
\Delta \Theta^{\mathrm{1\text{-}loop}}_{\Lambda\Sigma}
=\frac{1}{2}\,q_\Lambda q_\Sigma\,\mathrm{sign}(m)\,,
\label{eq:singlefermionshift}
\end{equation}
as originally derived by explicit Pauli-Villars regularization and by spectral asymmetry methods \cite{NiemiSemenoff,Redlich}. The factor of $\tfrac12$ and the sign are fixed unambiguously by the requirement that the regulated determinant be gauge invariant and that, as $m$ crosses zero, the induced Chern-Simons level jumps by the minimal amount dictated by the index of the three-dimensional Dirac operator coupled to the background gauge field. Equation \eqref{eq:singlefermionshift} is the complete contribution of a massive Dirac fermion to the abelian Chern-Simons matrix in a gapped background.

Applying \eqref{eq:singlefermionshift} to the full massive spectrum obtained from the four-dimensional chiral fermions yields a one-loop induced Chern-Simons matrix
\begin{equation}
\Theta^{\mathrm{1\text{-}loop}}_{\Lambda\Sigma}(\langle\zeta\rangle)
=\frac{1}{2}\sum_{q} n(q)\sum_{n\in\mathbb{Z}} q_\Lambda(q,n)\,q_\Sigma(q,n)\,
\mathrm{sign}\!\big(m(q,n)\big)\,,
\qquad q_\Lambda(q,n)=(n,q_m)\,,
\label{eq:ThetaLoopDef}
\end{equation}
with $\Lambda,\Sigma\in\{0,1,\dots,r,\dots\}$, where the ellipsis indicates the straightforward inclusion of Cartan $U(1)$ factors in non-abelian sectors if present. The sum \eqref{eq:ThetaLoopDef} is conditionally convergent because of the infinite Kaluza-Klein tower and must be defined by a regularization that preserves three-dimensional gauge invariance and is compatible with the higher-dimensional large gauge transformations around the circle. The appropriate notion is the same as in \cite{GrimmKapfer}: one defines \eqref{eq:ThetaLoopDef} by a symmetric regularization of the tower which subtracts only $q$-independent power divergences and thus respects both the integer relabeling of Kaluza-Klein levels and the integrality of Chern-Simons counterterms. This prescription is equivalent to a zeta-function definition of the relevant spectral asymmetry and yields a piecewise-polynomial, locally constant function of the Coulomb-branch parameters, as required by the fact that the only physical non-analyticities arise when a mode becomes massless.

To evaluate \eqref{eq:ThetaLoopDef} explicitly it is convenient to introduce, for each charge sector $q$, the dimensionless combination
\begin{equation}
x(q):= r\,q_m\langle\zeta_m\rangle\in\mathbb{R},
\label{eq:xqdef}
\end{equation}
so that $\mathrm{sign}(m(q,n))=\mathrm{sign}(n+x(q))$. For generic $x(q)\notin\mathbb{Z}$ define the integer
\begin{equation}
\ell(q):=\big\lfloor x(q)\big\rfloor\in\mathbb{Z}\,,
\label{eq:ellqdef}
\end{equation}
so that for $x(q)\in(\ell(q),\ell(q)+1)$ the sign pattern of the tower is $\mathrm{sign}(n+x(q))=+1$ for $n\ge -\ell(q)$ and $\mathrm{sign}(n+x(q))=-1$ for $n\le -\ell(q)-1$. The evaluation of \eqref{eq:ThetaLoopDef} reduces to three regulated sums,
\begin{equation}
S_0(x):=\sum_{n\in\mathbb{Z}}\mathrm{sign}(n+x)\,,\qquad
S_1(x):=\sum_{n\in\mathbb{Z}}n\,\mathrm{sign}(n+x)\,,\qquad
S_2(x):=\sum_{n\in\mathbb{Z}}n^2\,\mathrm{sign}(n+x)\,,
\label{eq:S012def}
\end{equation}
defined by symmetric cutoffs together with zeta-function subtraction of $x$-independent divergences. Concretely, for $x\in(\ell,\ell+1)$ with $\ell=\lfloor x\rfloor$ one computes with a symmetric cutoff $N>\ell$ the finite sums
\begin{equation}
S^{(N)}_0(x)=\sum_{n=-N}^{N}\mathrm{sign}(n+x)=2\ell+1\,,
\label{eq:S0finite}
\end{equation}
which are already independent of $N$, and
\begin{equation}
S^{(N)}_1(x)=\sum_{n=-N}^{N}n\,\mathrm{sign}(n+x)=N^2+N-\ell^2-\ell\,.
\label{eq:S1finite}
\end{equation}
The divergent part $N^2+N$ is independent of $x$ and corresponds to a local counterterm that can be absorbed into a scheme choice; the gauge-invariant zeta prescription fixes the $x$-dependent remainder and yields
\begin{equation}
S_1(x)=-\ell(\ell+1)\,.
\label{eq:S1reg}
\end{equation}
Similarly, one finds that the $n^2$-weighted sum is finite after symmetric pairing and equals
\begin{equation}
S_2(x)=\sum_{n\in\mathbb{Z}}n^2\,\mathrm{sign}(n+x)=\frac{1}{3}\,\ell(\ell+1)(2\ell+1)\,.
\label{eq:S2reg}
\end{equation}
Equations \eqref{eq:S0finite}, \eqref{eq:S1reg}, and \eqref{eq:S2reg} are the unique outputs compatible with the physical requirements that (i) the sums jump only when $x$ crosses an integer (corresponding to a mode becoming massless), (ii) the only subtractions are $x$-independent and hence correspond to local, globally well-defined counterterms, and (iii) the resulting Chern-Simons levels transform by integral shifts under the integer relabelings induced by higher-dimensional large gauge transformations.

Inserting \eqref{eq:S0finite}-\eqref{eq:S2reg} into \eqref{eq:ThetaLoopDef} gives explicit expressions for the one-loop Chern-Simons coefficients in terms of the four-dimensional chiral spectrum. Writing $\ell(q)=\lfloor r\,q_m\langle\zeta_m\rangle\rfloor$ and using the charge assignments $q_0=n$, $q_m=q_m$, one obtains on a generic Coulomb-branch chamber
\begin{align}
\Theta^{\mathrm{1\text{-}loop}}_{mn}(\langle\zeta\rangle)
&=\frac{1}{2}\sum_{q} n(q)\,q_m q_n\,\big(2\ell(q)+1\big)\,,
\label{eq:Thetamn}\\[2mm]
\Theta^{\mathrm{1\text{-}loop}}_{0m}(\langle\zeta\rangle)
&=\frac{1}{2}\sum_{q} n(q)\,q_m\,S_1\!\big(x(q)\big)
=-\frac{1}{2}\sum_{q} n(q)\,q_m\,\ell(q)\big(\ell(q)+1\big)\,,
\label{eq:Theta0m}\\[2mm]
\Theta^{\mathrm{1\text{-}loop}}_{00}(\langle\zeta\rangle)
&=\frac{1}{2}\sum_{q} n(q)\,S_2\!\big(x(q)\big)
=\frac{1}{6}\sum_{q} n(q)\,\ell(q)\big(\ell(q)+1\big)\big(2\ell(q)+1\big)\,.
\label{eq:Theta00}
\end{align}
These are the complete parity-odd gauge Chern-Simons couplings induced by integrating out the massive fermions in the Kaluza-Klein towers of the four-dimensional chiral spectrum, in the normalization fixed by \eqref{eq:singlefermionshift}. Their dependence on $\langle\zeta\rangle$ is entirely through the integers $\ell(q)$, reflecting the fact that within each Coulomb-branch chamber the sign pattern of every Kaluza-Klein tower is constant. The only non-analyticities are the expected jumps when $r\,q_m\langle\zeta_m\rangle\in\mathbb{Z}$ for some $q$ with $n(q)\neq 0$, where a mode becomes massless and the low-energy description must be modified to retain it.

The key structural test is the response of \eqref{eq:Thetamn}-\eqref{eq:Theta00} under the lattice of higher-dimensional large gauge transformations around the circle. For $k_m\in\mathbb{Z}$, the transformation \eqref{eq:sec3-largeGT} acts on the Coulomb-branch background as $\langle\zeta_m\rangle\mapsto \langle\zeta_m\rangle-\tfrac{k_m}{r}$ and on the gauge fields as $A^m\mapsto A^m+k_m A^0$, with $A^0$ fixed. The integer relabeling of Kaluza-Klein levels described in \cite{GrimmKapfer} implies that the mass signs $\mathrm{sign}(m(q,n))$ are preserved, while the tower data reorganize by $n\mapsto n+k_m q_m$. In terms of \eqref{eq:ellqdef}, since $k_m q_m\in\mathbb{Z}$ for integral charges, one has the exact shift
\begin{equation}
\ell(q)\ \longmapsto\ \ell'(q)=\Big\lfloor r\,q_m\Big(\langle\zeta_m\rangle-\frac{k_m}{r}\Big)\Big\rfloor
=\ell(q)-k_m q_m\,.
\label{eq:ellshift}
\end{equation}
The transformation of the one-loop Chern-Simons matrix follows by inserting \eqref{eq:ellshift} into \eqref{eq:Thetamn}-\eqref{eq:Theta00} and using the anomaly coefficients \eqref{eq:sec3-anomcoeff}. For the purely gauge sector one finds
\begin{equation}
\Theta^{\mathrm{1\text{-}loop}}_{mn}(\langle\zeta\rangle-\tfrac{k}{r})
=\Theta^{\mathrm{1\text{-}loop}}_{mn}(\langle\zeta\rangle)-k_k\,\mathcal{A}_{mnk}\,,
\label{eq:mnshift}
\end{equation}
which exhibits the cubic four-dimensional anomaly coefficients $\mathcal{A}_{mnk}$ as the universal obstruction to single-valuedness of the one-loop induced $U(1)^r$ Chern-Simons levels on the quotient of the Coulomb branch by the lattice \eqref{eq:sec3-largeGT}. For the mixed sector involving the Kaluza-Klein vector one obtains affine transformation laws which are most naturally written in terms of the already-defined one-loop coefficients. A direct substitution of \eqref{eq:ellshift} into \eqref{eq:Theta0m} yields
\begin{equation}
\Theta^{\mathrm{1\text{-}loop}}_{0m}(\langle\zeta\rangle-\tfrac{k}{r})
=\Theta^{\mathrm{1\text{-}loop}}_{0m}(\langle\zeta\rangle)
+k_n\,\Theta^{\mathrm{1\text{-}loop}}_{mn}(\langle\zeta\rangle)
-\frac{1}{2}\,k_n k_k\,\mathcal{A}_{mnk}\,,
\label{eq:0mshift}
\end{equation}
and, similarly, inserting \eqref{eq:ellshift} into \eqref{eq:Theta00} and reorganizing the result using \eqref{eq:Thetamn}-\eqref{eq:Theta0m} together with \eqref{eq:sec3-anomcoeff} gives
\begin{equation}
\Theta^{\mathrm{1\text{-}loop}}_{00}(\langle\zeta\rangle-\tfrac{k}{r})
=\Theta^{\mathrm{1\text{-}loop}}_{00}(\langle\zeta\rangle)
+2\,k_m\,\Theta^{\mathrm{1\text{-}loop}}_{0m}(\langle\zeta\rangle)
+k_m k_n\,\Theta^{\mathrm{1\text{-}loop}}_{mn}(\langle\zeta\rangle)
-\frac{1}{3}\,k_m k_n k_k\,\mathcal{A}_{mnk}
-\frac{1}{6}\,k_m\,\mathcal{A}_m\,.
\label{eq:00shift}
\end{equation}
The last term proportional to $\mathcal{A}_m$ is the three-dimensional avatar of the four-dimensional mixed abelian-gravitational anomaly coefficients: it is the unique linear contribution compatible with the quantized form of the Kaluza-Klein charge and with the fact that the Kaluza-Klein vector belongs to the three-dimensional gravity multiplet. Equations \eqref{eq:mnshift}-\eqref{eq:00shift} are exact, regulator-independent statements within the class of gauge-invariant regularizations compatible with the tower relabeling, and they exhibit in a precise sense that the only non-trivial global obstruction to defining the parity-odd one-loop effective action on the quotient of the Coulomb branch by \eqref{eq:sec3-largeGT} is measured by the four-dimensional anomaly coefficients $\mathcal{A}_{mnk}$ and $\mathcal{A}_m$.

To connect \eqref{eq:mnshift}-\eqref{eq:00shift} to the four-dimensional anomaly conditions, consider the full three-dimensional parity-odd effective action obtained from the circle reduction, including the classical contributions induced by the gauged axions. As reviewed in Section~3, the axionic gaugings \eqref{eq:sec3-gauging} imply that higher-dimensional large gauge transformations with winding $k_m$ generate discrete circle flux for the axions, $\int_{S^1} d\rho_\alpha\mapsto \int_{S^1} d\rho_\alpha-2\pi \Theta_{m\alpha}k_m$ in \eqref{eq:sec3-axionLG}. Evaluating the four-dimensional Green-Schwarz couplings \eqref{eq:sec3-SGS} on such backgrounds and reducing to three dimensions produces classical Chern-Simons counterterms whose coefficients are fixed solely by $a_\alpha$, $b^\alpha_{mn}$, and the gaugings $\Theta_{m\alpha}$, and whose variation under \eqref{eq:sec3-largeGT} is local and quantized. The one-loop variation encoded in \eqref{eq:mnshift}-\eqref{eq:00shift} is likewise local and quantized, and the statement that the combined parity-odd action is globally well-defined on the quotient by \eqref{eq:sec3-largeGT}-\eqref{eq:sec3-axionLG} is precisely that the net Chern-Simons shift vanishes. Matching the coefficients of the resulting local three-dimensional variations is equivalent, by the standard descent formalism and the identification of the circle-winding gauge transformation parameters with the four-dimensional gauge parameters linear in the circle coordinate, to the cancellation of the four-dimensional anomaly polynomial $I^{\mathrm{1\text{-}loop}}_6$ by the Green-Schwarz contribution $I^{\mathrm{GS}}_6$ in \eqref{eq:I6general}-\eqref{eq:I6GSfactor}. In the present notation this equivalence reproduces exactly the four-dimensional relations \eqref{eq:sec3-anomcubic}-\eqref{eq:sec3-anommixed}, namely
\begin{equation}
\frac{1}{6}\,\mathcal{A}_{mnk}=\frac{1}{4}\,b^\alpha_{(mn}\Theta_{k)\alpha}\,,
\qquad
\frac{1}{48}\,\mathcal{A}_m=-\frac{1}{16}\,a_\alpha \Theta_{m\alpha}\,,
\label{eq:GSfromCS}
\end{equation}
which are therefore reinterpreted as the necessary and sufficient conditions for the circle-reduced parity-odd effective action to descend to the quotient of the Coulomb branch by the higher-dimensional large gauge lattice. In this sense the three-dimensional one-loop Chern-Simons data $\Theta^{\mathrm{1\text{-}loop}}_{\Lambda\Sigma}$ furnish a complete encoding of the local four-dimensional abelian gauge and mixed gauge-gravitational anomalies, with all normalizations fixed by \eqref{eq:singlefermionshift} and by the quantization of the Chern-Simons functional in \eqref{eq:sec3-SCS}.

It remains to explain the compatibility of \eqref{eq:ThetaLoopDef}-\eqref{eq:00shift} with the M-theory dual interpretation, without repeating the derivation of Method~A. Under M-/F-theory duality, the same three-dimensional effective theory admits a description as M-theory on the resolved elliptic fourfold $\mathcal{Y}$ in the F-theory limit. In that frame, charged particles arise from M2-branes wrapped on fibral curves $C$ and carry charges given by intersection with the divisor classes that generate the abelian gauge fields. In particular, for the Mordell-Weil $U(1)$ factors one has
\begin{equation}
q_m = \mathrm{Sh}(s_m)\cdot C \,,
\label{eq:qfromShioda}
\end{equation}
as in \eqref{eq:singlefermionshift}, and the Kaluza-Klein tower in the circle reduction maps to the tower of M2 states carrying momentum along the shrinking elliptic fiber. The chiral indices $n(q)$ entering \eqref{eq:ThetaLoopDef} are determined in M-theory by the $G_4$-flux through the corresponding matter surfaces, so that the spectrum data entering the sums are themselves flux-controlled topological invariants. The statement that the flux-induced Chern-Simons matrix computed geometrically in Method~A equals the one-loop matrix computed here,
\begin{equation}
\Theta^{M}_{\Lambda\Sigma}=\Theta^{\mathrm{1\text{-}loop}}_{\Lambda\Sigma}\,,
\label{eq:ThetaMatch}
\end{equation}
is then the three-dimensional equality of two computations of the same quantized parity-odd couplings, one as a classical reduction of the eleven-dimensional Chern-Simons interaction in the flux background and the other as the one-loop parity anomaly from integrating out the full massive BPS spectrum. Since both sides are quantized and locally constant on the Coulomb-branch chambers, it suffices for \eqref{eq:ThetaMatch} that they agree in one chamber and that they have identical jump behavior across walls where states become massless; the latter is guaranteed because in M-theory the same BPS states whose one-loop determinants generate \eqref{eq:ThetaLoopDef} are precisely those whose charges are measured by intersection with $\omega_\Lambda$ and whose multiplicities are fixed by the flux data. In this way \eqref{eq:ThetaMatch} is the intrinsic duality statement that anomaly inflow and the parity anomaly describe the same Chern-Simons couplings in two dual frames, and it is the precise implementation of the fact that the Green-Schwarz mechanism is automatic in the M-theory description once the flux quantization and the divisor-lattice identities of the elliptic fibration are imposed \cite{CveticGrimmKlevers1210,GrimmKapfer}.

\section{Modular properties and \texorpdfstring{$\SL(2,\mathbb{Z})$}{SL(2,Z)} covariance}
\label{sec:modular}

The defining structural input of F-theory is that the type IIB axio-dilaton $\tau=C_0+i\,e^{-\phi}$ is not an ordinary scalar field on spacetime but is geometrized as the complex structure modulus of an elliptic curve varying over the base $\mathcal{B}_3$ of an elliptically fibered Calabi-Yau space $\pi:\mathcal{Y}\to\mathcal{B}_3$ \cite{VafaFTheory,MorrisonVafaI,MorrisonVafaII,SenLimit,WeigandReview}. The classical type IIB duality group $\SL(2,\mathbb{R})$ acts on $\tau$ by fractional linear transformations, while the quantum theory retains only the arithmetic subgroup $\SL(2,\mathbb{Z})$ as an exact duality. In the elliptic description this duality is realized as the mapping class group of the torus, namely the group of large diffeomorphisms acting on the integral homology $H_1(E,\mathbb{Z})$ by symplectic automorphisms. The essential point for the present analysis is that the duality is therefore realized geometrically by transition functions of the elliptic fibration and not by an additional dynamical gauge symmetry in the effective theory; consequently, duality covariance is the requirement that all physical quantities are globally well-defined with respect to these geometric transition functions, possibly as sections of associated bundles, and that quantum anomalies in this global definition are absent after inclusion of the appropriate local counterterms.

To fix conventions, recall that an elliptic curve of modulus $\tau$ may be written as $E_\tau=\mathbb{C}/(\mathbb{Z}+\tau\mathbb{Z})$, and a choice of symplectic basis $(\alpha,\beta)$ of $H_1(E_\tau,\mathbb{Z})$ determines $\tau$ by $\tau=\int_\beta \omega/\int_\alpha \omega$, where $\omega$ is a holomorphic one-form. A change of basis by $\gamma=\begin{pmatrix}a&b\\ c&d\end{pmatrix}\in\SL(2,\mathbb{Z})$ sends $(\alpha,\beta)\mapsto (a\alpha+b\beta,c\alpha+d\beta)$ and induces the standard modular action
\begin{equation}
\tau\ \longmapsto\ \gamma\cdot\tau:=\frac{a\tau+b}{c\tau+d}\,.
\label{eq:SL2Zaction}
\end{equation}
The holomorphic one-form transforms with modular weight $-1$,
\begin{equation}
\omega_{\gamma\cdot\tau}=(c\tau+d)^{-1}\,\omega_{\tau}\,,
\label{eq:omegaWeight}
\end{equation}
which is the geometric origin of modular weights in the fibration. In a family of elliptic curves over $\mathcal{B}_3$, the line of holomorphic one-forms along the fiber defines the Hodge line bundle, which may be characterized intrinsically as the line bundle
\begin{equation}
\mathcal{L}:=\left(\pi_\ast\,\omega_{\mathcal{Y}/\mathcal{B}_3}\right)^{-1},
\label{eq:HodgeBundleDef}
\end{equation}
where $\omega_{\mathcal{Y}/\mathcal{B}_3}$ is the relative dualizing sheaf. A local choice of holomorphic one-form trivializes $\mathcal{L}^{-1}$, and on overlaps two trivializations differ by multiplication with the transition function $(c\tau+d)$ induced by the change of symplectic basis; thus a field of modular weight $k$ is naturally a section of $\mathcal{L}^k$, and the statement of $\SL(2,\mathbb{Z})$ covariance is equivalently the statement that the effective action is globally well-defined under changes of trivialization of $\mathcal{L}$ \cite{WeigandReview}.

This structure is realized concretely in the Weierstrass presentation of the elliptic fibration. Writing the Weierstrass coefficients as holomorphic sections
\begin{equation}
f\in H^0(\mathcal{B}_3,\mathcal{L}^4)\,,\qquad g\in H^0(\mathcal{B}_3,\mathcal{L}^6)\,,\qquad \Delta:=4f^3+27g^2\in H^0(\mathcal{B}_3,\mathcal{L}^{12})\,,
\label{eq:fgDeltaSections}
\end{equation}
the modular weights $4,6,12$ are encoded by the powers of the Hodge bundle. The modular invariant $j$-function is expressed in terms of $f,g$ as
\begin{equation}
j(\tau)=1728\,\frac{4f^3}{\Delta}\,,
\label{eq:jInvariant}
\end{equation}
which is a globally defined meromorphic function on $\mathcal{B}_3$ away from the discriminant locus and determines $\tau$ only as a multivalued function with $\SL(2,\mathbb{Z})$ monodromy around components of $\{\Delta=0\}$. In particular, while $\tau$ is locally a holomorphic function in supersymmetric backgrounds, its global meaning is as a section of the modular stack $\mathfrak{H}/\SL(2,\mathbb{Z})$; the geometric data of the fibration, including intersection products on $\mathcal{Y}$ and pushforwards to $\mathcal{B}_3$, are intrinsically defined on this quotient and therefore do not depend on any choice of local $\SL(2,\mathbb{Z})$ frame. For Calabi-Yau total space, the condition $K_\mathcal{Y}\simeq\mathcal{O}_\mathcal{Y}$ implies the familiar identification $\mathcal{L}\simeq K_{\mathcal{B}_3}^{-1}$, and consequently the discriminant divisor class satisfies $[\Delta]=12\,c_1(\mathcal{L})=12\,c_1(\mathcal{B}_3)$, which is the geometric statement that the total seven-brane charge is fixed by the curvature of the base \cite{MorrisonVafaI,MorrisonVafaII,WeigandReview}.

The physical implementation of \eqref{eq:SL2Zaction} in type IIB supergravity is most transparently described in the standard coset formulation of the scalar manifold $\SL(2,\mathbb{R})/U(1)$, in which the local $U(1)$ is a gauge redundancy acting on an auxiliary angular field and the composite $U(1)$ connection couples chirally to fermions \cite{GaberdielGreen,MinasianSasmalSavelli}. Introducing an angular variable $\varphi$ and working in Einstein frame, one may define the composite $U(1)$ connection $Q$ by
\begin{equation}
Q_\mu=\partial_\mu \varphi-\frac{\partial_\mu \tau_1}{2\tau_2}\,,
\qquad
F:=dQ=\frac{d\tau\wedge d\bar{\tau}}{4i\,\tau_2^2}\,.
\label{eq:CompositeConnection}
\end{equation}
The transformation \eqref{eq:SL2Zaction} is accompanied by a compensating $U(1)$ transformation $\varphi\mapsto\varphi+\Sigma$ with $\Sigma(\tau)=-\arg(c\tau+d)$ in order to preserve the standard gauge choice on the coset representative, and in a gauge-fixed formulation a field $\Psi$ of $U(1)$ charge $q$ transforms under $\SL(2,\mathbb{R})$ by the local phase $\Psi\mapsto e^{iq\Sigma(\tau)}\Psi$ \cite{MinasianSasmalSavelli}. The essential geometric fact is that $Q$ is precisely a connection on the Hodge bundle (or an appropriate power thereof, depending on charge normalization), and its curvature $F/2\pi$ represents the first Chern class $c_1(\mathcal{L})$ in cohomology. In backgrounds with trivial $\tau$ profile, $F$ is exact and the composite structure forces the relevant anomaly density to be cohomologically trivial; in backgrounds with seven-branes, $\tau$ undergoes $\SL(2,\mathbb{Z})$ monodromies around the discriminant and $Q$ is only locally defined, while $F$ is globally defined and represents a generally nontrivial class determined by $c_1(\mathcal{L})$ \cite{GaberdielGreen,MinasianSasmalSavelli}.

Quantum mechanically, the chiral couplings of the type IIB gravitini and dilatini to the composite connection give rise to an anomaly in the local $U(1)$ symmetry, which in a gauge-fixed description becomes an anomaly of the discrete duality group $\SL(2,\mathbb{Z})$ \cite{GaberdielGreen,MukhiSL2Z,MinasianSasmalSavelli}. This anomaly admits a standard descent description in terms of an index polynomial. In the conventions of \cite{MinasianSasmalSavelli}, the relevant $12$-form polynomial is
\begin{equation}
P_{12}
=\frac{F^2}{2\pi}\left[
2\,X_8^{-}(R)+\frac{p_1(R)}{48}\left(\frac{F}{2\pi}\right)^2-\frac{1}{32}\left(\frac{F}{2\pi}\right)^4
\right],
\qquad
X_8^{\pm}(R)=\frac{1}{192(2\pi)^4}\left(\tr R^4\pm\frac{1}{4}(\tr R^2)^2\right),
\label{eq:IIBAnomalyPolynomial}
\end{equation}
with $R$ the ten-dimensional curvature two-form and $p_1(R)=-(1/2)\,\tr R^2/(2\pi)^2$ the first Pontryagin class. The corresponding anomalous phase variation of the partition function under a $U(1)$ transformation with parameter $\Sigma$ is determined by the descent of $P_{12}=dP_{11}$ and takes the form
\begin{equation}
\Delta \mathcal{W}
=-\int_{M_{10}}\Sigma\left[
2\,X_8^{-}(R)+\frac{p_1(R)}{48}\left(\frac{F}{2\pi}\right)^2-\frac{1}{32}\left(\frac{F}{2\pi}\right)^4
\right]\frac{F}{2\pi}\,,
\label{eq:AnomalousPhase}
\end{equation}
which is nontrivial precisely when $F$ represents a nontrivial cohomology class, as occurs in generic F-theory backgrounds with seven-branes. The requirement that the quantum theory be well-defined under the exact duality group $\SL(2,\mathbb{Z})$ is the requirement that \eqref{eq:AnomalousPhase} is cancelled by a local counterterm whose variation under $\SL(2,\mathbb{Z})$ reproduces the same phase. The Green-Gaberdiel analysis shows that such a counterterm can be written in ten dimensions as a Chern-Simons-like coupling involving a modular function of $\tau$ \cite{GaberdielGreen}, and a convenient explicit representative is
\begin{equation}
S^{(10)}_{\mathrm{ct}}
=i\int_{M_{10}}
\ln\!\left(
\frac{\eta(\tau)\,\bar{j}^{1/12}(\bar{\tau})}{\bar{\eta}(\bar{\tau})\,j^{1/12}(\tau)}
\right)
\left[
2\,X_8^{-}(R)+\frac{p_1(R)}{48}\left(\frac{F}{2\pi}\right)^2-\frac{1}{32}\left(\frac{F}{2\pi}\right)^4
\right]\frac{F}{2\pi}\,,
\label{eq:GreenGaberdielCounterterm}
\end{equation}
where $\eta(\tau)$ is the Dedekind eta function and $j(\tau)$ is the modular invariant in \eqref{eq:jInvariant}. The modular transformation properties $\eta(\gamma\cdot\tau)=\epsilon(\gamma)\,(c\tau+d)^{1/2}\eta(\tau)$ and $j(\gamma\cdot\tau)=j(\tau)$ imply that the logarithm in \eqref{eq:GreenGaberdielCounterterm} shifts by a term proportional to $\arg(c\tau+d)$ plus a $\tau$-independent constant phase determined by $\epsilon(\gamma)$; the former produces precisely the local variation needed to cancel \eqref{eq:AnomalousPhase}, while the latter encodes a residual global phase which is invisible to local anomaly descent but can impose a topological restriction on Euclidean backgrounds \cite{GaberdielGreen,MukhiSL2Z,MinasianSasmalSavelli}. In particular, in F-theory compactifications where $F/2\pi$ is fixed by $c_1(\mathcal{L})$, the counterterm \eqref{eq:GreenGaberdielCounterterm} is determined entirely by the elliptic fibration and the background metric data, and its presence is required for a globally consistent $\SL(2,\mathbb{Z})$ action on the quantum measure.

The compactified theory considered in this manuscript inherits both the geometric realization of $\SL(2,\mathbb{Z})$ through the Hodge bundle and the quantum correction \eqref{eq:GreenGaberdielCounterterm}. Compatibility with the lower-dimensional effective action requires that every coupling used in the anomaly analysis is either strictly invariant under changes of $\SL(2,\mathbb{Z})$ frame or transforms covariantly as a section of an appropriate power of $\mathcal{L}$. In the present setting, the abelian gauge fields originate from divisor classes in $\mathcal{Y}$ via the standard M-/F-theory dictionary, so their definition is purely intersection-theoretic and does not involve any choice of basis of $H_1(E_\tau,\mathbb{Z})$. The Green-Schwarz data in four dimensions are encoded by the height-pairing classes on the base and by the canonical class of $\mathcal{B}_3$. Writing $D_m=\Sh(s_m)$ for the Shioda images of the Mordell-Weil generators, the N\'eron-Tate height pairing is
\begin{equation}
b_{mn}:=-\pi_\ast(D_m\cdot D_n)\in H^{1,1}(\mathcal{B}_3)\cap H^2(\mathcal{B}_3,\mathbb{Z}),
\label{eq:HeightPairingModular}
\end{equation}
and the gravitational coefficient is determined by $K_{\mathcal{B}_3}$, equivalently by $c_1(\mathcal{L})$ when $\mathcal{Y}$ is Calabi-Yau. The invariance statement relevant for duality is that the construction \eqref{eq:HeightPairingModular} is intrinsic to the elliptic fibration: it depends only on divisor classes in the N\'eron-Severi lattice of $\mathcal{Y}$, the intersection product on $\mathcal{Y}$, and the proper pushforward $\pi_\ast$. Since the mapping class group action changes only the symplectic basis of $H_1(E_\tau,\mathbb{Z})$ and hence the local trivialization of $\mathcal{L}$, it leaves the divisor classes $D_m$ and the operations $(\cdot,\pi_\ast)$ unchanged. The same statement holds for the flux-induced gaugings $\Theta_{m\alpha}$ and the associated four-dimensional counterterms, because $\Theta_{m\alpha}$ are defined by topological pairings of $G_4$ with integral cohomology classes on $\mathcal{Y}$ and therefore depend only on the fibration as an algebraic variety and not on any $\SL(2,\mathbb{Z})$ frame. The resulting Green-Schwarz four-forms $X_4^\alpha$ appearing in the axionic couplings are therefore well-defined cohomology classes on $\mathcal{B}_3$ and are strictly invariant under the geometric realization of $\SL(2,\mathbb{Z})$.

At the level of anomaly coefficients, the one-loop gauge and mixed gauge-gravitational anomalies are determined by the chiral index of four-dimensional charged fermions and thus by flux data and matter surfaces in $\mathcal{Y}$. The resulting coefficients $\mathcal{A}_{mnk}$ and $\mathcal{A}_m$ are numerical invariants extracted from the spectrum and are insensitive to a change of $\SL(2,\mathbb{Z})$ frame, because the duality acts by relabeling of $(p,q)$ charges in the underlying type IIB description while leaving the geometric intersection pairing that computes charges under Mordell-Weil $U(1)$'s fixed. The condition that the anomaly-cancelled effective action is duality-covariant is therefore the conjunction of two independent statements: the local four-dimensional gauge and mixed anomalies are cancelled by the Green-Schwarz terms built from $b_{mn}$ and $K_{\mathcal{B}_3}$, while the ten-dimensional duality anomaly induced by the composite connection is cancelled by the counterterm \eqref{eq:GreenGaberdielCounterterm}. These mechanisms are compatible because the former involves the gauge variation of axions descending from the $\SL(2,\mathbb{Z})$ singlet $C_4$ and is controlled by intersection data on $\mathcal{Y}$, whereas the latter involves only the composite curvature $F$ associated to $\mathcal{L}$ and ten-dimensional curvature invariants, and hence depends only on the background $\tau$ profile and geometry. In particular, the height pairing classes \eqref{eq:HeightPairingModular} provide a duality-invariant packaging of the abelian counterterm data precisely because they are defined without reference to the fiber homology basis, and the same duality invariance is inherited by the three-dimensional Chern-Simons couplings obtained after circle reduction, whether computed geometrically in the M-theory frame or via the one-loop parity anomaly in the spectrum computation.

The compatibility with M-/F-theory duality follows from the fact that the geometric realization of $\SL(2,\mathbb{Z})$ in F-theory is a statement about large diffeomorphisms of the elliptic fiber, while the M-theory description on $\mathcal{Y}$ depends only on the topology and intersection theory of $\mathcal{Y}$ and is therefore manifestly invariant under such diffeomorphisms. In the three-dimensional dual description, the Chern-Simons matrix is quantized and locally constant on the Coulomb branch, and its computation from integrating out massive BPS states depends only on their charges and multiplicities; these are computed by intersection with Shioda divisors and by flux integrals on matter surfaces, which are geometric invariants of $\mathcal{Y}$. The equality between the M-theory flux-induced Chern-Simons terms and the one-loop effective action thus provides a duality-covariant formulation of the anomaly structure: both sides define the same quantized data in a way that is insensitive to the choice of modular frame, while the necessary ten-dimensional counterterm \eqref{eq:GreenGaberdielCounterterm} ensures that the underlying type IIB description is globally consistent as an $\SL(2,\mathbb{Z})$ theory on backgrounds with nontrivial $\tau$ monodromy. In this sense the anomaly-cancelled effective action of the compactified theory is $\SL(2,\mathbb{Z})$-covariant in the precise quantum-mechanical sense required by F-theory, namely that it is globally well-defined under changes of trivialization of the Hodge bundle and under the induced discrete duality transformations of the underlying type IIB fields, with all required local counterterms included.

\section{Dimensional reduction of the ten-dimensional \texorpdfstring{$\SL(2,\mathbb{Z})$}{SL(2,Z)} counterterm}
\label{sec:SL2Zreduction}

In this section we make explicit, at the level of the four-dimensional effective action, how the universal ten-dimensional $\SL(2,\mathbb{Z})$ anomaly and counterterm reviewed in \eqref{eq:IIBAnomalyPolynomial}-\eqref{eq:GreenGaberdielCounterterm} descend in an F-theory compactification on a smooth complex threefold base $B_3$. The purpose is twofold: first, to isolate the unique four-dimensional parity-even topological coupling induced by the counterterm \eqref{eq:GreenGaberdielCounterterm} in backgrounds relevant to four-dimensional F-theory vacua; second, to show that its coefficient is fixed entirely by intrinsic characteristic classes of $B_3$ and the Hodge line bundle $\mathcal{L}$, and therefore is automatically compatible with the duality-invariant abelian data used elsewhere in this work. No input beyond the existence of the composite connection \eqref{eq:CompositeConnection}, the anomaly polynomial \eqref{eq:IIBAnomalyPolynomial}, and standard Chern-Weil identities is required.

We work on a background of the form
\begin{equation}
M_{10}=M_4 \times B_3,
\label{eq:M10product}
\end{equation}
where $M_4$ is a smooth oriented four-manifold equipped with a spin structure (as appropriate to type IIB supergravity) and $B_3$ is a smooth compact complex threefold. We assume that the axio-dilaton profile $\tau$ varies only over $B_3$ and is independent of the coordinates on $M_4$, as in an F-theory background, and that the metric is a direct product so that the Levi-Civita curvature decomposes as a block sum. Denoting by $R$ the ten-dimensional curvature two-form, this implies the decomposition
\begin{equation}
R=R_{(4)}\oplus R_{(B)}\qquad\text{in}\qquad \mathfrak{so}(10)\simeq \mathfrak{so}(4)\oplus \mathfrak{so}(6),
\label{eq:Rsplit}
\end{equation}
with $R_{(4)}$ pulled back from $M_4$ and $R_{(B)}$ pulled back from $B_3$. In particular, for the trace $\tr$ used in \eqref{eq:IIBAnomalyPolynomial} and the first Pontryagin form $p_1(R)=-(1/2)\,\tr R^2/(2\pi)^2$ fixed there, one has
\begin{equation}
\tr R^2=\tr R_{(4)}^2+\tr R_{(B)}^2,
\qquad
p_1(R)=p_1\!\big(R_{(4)}\big)+p_1\!\big(R_{(B)}\big),
\label{eq:trsplit}
\end{equation}
and similarly $\tr R^4=\tr R_{(4)}^4+\tr R_{(B)}^4$ because the curvature is block-diagonal. The composite $U(1)$ curvature $F=dQ$ defined in \eqref{eq:CompositeConnection} is a two-form supported on $B_3$ by assumption on $\tau$, hence
\begin{equation}
F=\pi_B^\ast F_{(B)},\qquad F_{(B)}\in\Omega^2(B_3),
\label{eq:Finternal}
\end{equation}
and $F$ has no legs along $M_4$. Moreover, as emphasized below \eqref{eq:CompositeConnection}, the cohomology class of $F/2\pi$ equals $c_1(\mathcal{L})\in H^2(B_3,\mathbb{Z})$.

The ten-dimensional counterterm in the gauge-fixed formulation may be taken as \eqref{eq:GreenGaberdielCounterterm},
\begin{equation}
S^{(10)}_{\mathrm{ct}}
=
i\int_{M_{10}}
\Phi(\tau,\bar\tau)\,
\Bigg[
2X^-_8(R)+\frac{p_1(R)}{48}\Big(\frac{F}{2\pi}\Big)^2-\frac{1}{32}\Big(\frac{F}{2\pi}\Big)^4
\Bigg]\frac{F}{2\pi},
\label{eq:Sct10Phi}
\end{equation}
where $\Phi(\tau,\bar\tau):=\ln\!\left(\eta(\tau)\,\bar{j}^{1/12}(\bar\tau)/(\bar{\eta}(\bar\tau)\,j^{1/12}(\tau))\right)$ is the standard non-holomorphic modular function whose $\SL(2,\mathbb{Z})$ transformation cancels the local anomaly \eqref{eq:AnomalousPhase} \cite{GaberdielGreen,MinasianSasmalSavelli}. The reduction of \eqref{eq:Sct10Phi} to four dimensions is determined by the component of the ten-form integrand that has external degree four and internal degree six. Since $F$ is internal and $B_3$ has real dimension six, any term containing $(F/2\pi)^5$ vanishes identically on \eqref{eq:M10product} by degree reasons. Similarly, the purely internal part of $X_8^-(R)$ contributes only an internal ten-form after multiplication by $F/2\pi$ and therefore vanishes. Thus the only contributions to the four-dimensional effective action arise from the mixed $(4,6)$ components of $2X_8^-(R)\,F/2\pi$ and of $p_1(R)\,(F/2\pi)^3/48$.

We first determine the mixed component of $2X_8^-(R)$. By definition \eqref{eq:IIBAnomalyPolynomial},
\begin{equation}
X_8^-(R)=\frac{1}{192(2\pi)^4}\left(\tr R^4-\frac{1}{4}(\tr R^2)^2\right).
\label{eq:X8minusDef}
\end{equation}
Using \eqref{eq:trsplit} and the block-diagonal property of $\tr R^4$, the only term producing an external four-form factor is the cross term in $(\tr R^2)^2$,
\begin{equation}
(\tr R^2)^2=(\tr R_{(4)}^2)^2+(\tr R_{(B)}^2)^2+2\,\tr R_{(4)}^2\,\tr R_{(B)}^2,
\label{eq:trR2square}
\end{equation}
and hence the unique mixed component of $2X_8^-(R)$ is
\begin{equation}
\big(2X_8^-(R)\big)^{(4,4)}
=
-\frac{1}{192(2\pi)^4}\,\tr R_{(4)}^2\,\tr R_{(B)}^2,
\label{eq:X8mixed}
\end{equation}
where the superscript indicates external degree four and internal degree four. Rewriting in terms of Pontryagin forms using $p_1=-(1/2)\,\tr R^2/(2\pi)^2$ as fixed in \eqref{eq:IIBAnomalyPolynomial} gives
\begin{equation}
\big(2X_8^-(R)\big)^{(4,4)}
=
-\frac{1}{48}\,p_1\!\big(R_{(4)}\big)\,p_1\!\big(R_{(B)}\big).
\label{eq:X8mixedp1}
\end{equation}
Multiplying by $F/2\pi$ and using \eqref{eq:Finternal} yields the resulting mixed ten-form contribution
\begin{equation}
\left(2X_8^-(R)\,\frac{F}{2\pi}\right)^{(4,6)}
=
-\frac{1}{48}\,p_1\!\big(R_{(4)}\big)\,p_1\!\big(R_{(B)}\big)\,\frac{F_{(B)}}{2\pi}.
\label{eq:term1}
\end{equation}

Next consider the term proportional to $p_1(R)\,(F/2\pi)^3$. Since $F$ is internal and $(F/2\pi)^3$ is an internal six-form, only the external component $p_1(R_{(4)})$ contributes to the mixed degree $(4,6)$ part, while $p_1(R_{(B)})$ would yield an internal ten-form and hence vanishes. One therefore finds
\begin{equation}
\left(\frac{p_1(R)}{48}\Big(\frac{F}{2\pi}\Big)^2\frac{F}{2\pi}\right)^{(4,6)}
=
\frac{1}{48}\,p_1\!\big(R_{(4)}\big)\,\Big(\frac{F_{(B)}}{2\pi}\Big)^3.
\label{eq:term2}
\end{equation}
Combining \eqref{eq:term1} and \eqref{eq:term2}, the ten-dimensional counterterm \eqref{eq:Sct10Phi} reduces on \eqref{eq:M10product} to the four-dimensional coupling
\begin{equation}
S^{(4)}_{\mathrm{ct}}
=
i\int_{M_4} p_1\!\big(R_{(4)}\big)\;
\int_{B_3}\Phi(\tau,\bar\tau)\,
\omega_6(B_3),
\label{eq:Sct4general}
\end{equation}
where the internal six-form $\omega_6(B_3)$ is the universal characteristic-class expression
\begin{equation}
\omega_6(B_3)
=
\frac{1}{48}\left[
\Big(\frac{F_{(B)}}{2\pi}\Big)^3
-
p_1\!\big(R_{(B)}\big)\,\Big(\frac{F_{(B)}}{2\pi}\Big)
\right]\in \Omega^6(B_3).
\label{eq:omega6def}
\end{equation}
The dependence on the external fields is purely through the Pontryagin density $p_1(R_{(4)})$, so \eqref{eq:Sct4general} is the unique four-dimensional parity-even topological coupling induced by the ten-dimensional duality counterterm in an F-theory background. In particular, \eqref{eq:Sct4general} cannot modify any of the abelian Green-Schwarz data $(a_\alpha,b^\alpha_{mn},\Theta_{m\alpha})$ used in Sections~3-5, since it involves neither the four-dimensional abelian gauge fields nor the axions descending from $C_4$; its sole role is to render the quantum measure well-defined under $\SL(2,\mathbb{Z})$ transformations of the underlying type IIB variables, as reviewed around \eqref{eq:AnomalousPhase}-\eqref{eq:GreenGaberdielCounterterm}.

In the F-theory case of an elliptically fibered Calabi-Yau total space, the Hodge line bundle satisfies $\mathcal{L}\simeq K^{-1}_{B_3}$, hence $F_{(B)}/2\pi$ represents $c_1(\mathcal{L})=c_1(B_3)$ in cohomology, while $p_1(R_{(B)})$ represents the first Pontryagin class $p_1(TB_3)$ of the real tangent bundle. Moreover, for a complex manifold the standard relation between real Pontryagin classes and Chern classes gives
\begin{equation}
p_1(TB_3)=c_1(B_3)^2-2c_2(B_3),
\label{eq:p1cRelation}
\end{equation}
as an identity in integral cohomology \cite{MilnorStasheff}. Substituting $F_{(B)}/2\pi=c_1(B_3)$ and \eqref{eq:p1cRelation} into \eqref{eq:omega6def} yields a simplification that is specific to Calabi-Yau F-theory backgrounds,
\begin{equation}
\omega_6(B_3)
=
\frac{1}{48}\left[c_1(B_3)^3-\big(c_1(B_3)^2-2c_2(B_3)\big)c_1(B_3)\right]
=
\frac{1}{24}\,c_1(B_3)\,c_2(B_3).
\label{eq:omega6CY}
\end{equation}
Thus the entire reduction depends on the base geometry only through the cohomology class $c_1(B_3)c_2(B_3)$, and in particular through the topological number $\int_{B_3}c_1(B_3)c_2(B_3)$; by Hirzebruch-Riemann-Roch this number equals $24\,\chi(\mathcal{O}_{B_3})$ for a smooth complex threefold \cite{Hirzebruch}. In the explicit example of Section~7 with $B_3=\mathbb{P}^3$, one has $c_1(\mathbb{P}^3)=4H_B$ and $c_2(\mathbb{P}^3)=6H_B^2$, hence
\begin{equation}
\int_{\mathbb{P}^3}\omega_6(\mathbb{P}^3)=\frac{1}{24}\int_{\mathbb{P}^3}c_1c_2=\frac{1}{24}\,(4\cdot 6)\int_{\mathbb{P}^3}H_B^3=1,
\label{eq:P3omega6}
\end{equation}
Therefore, in that model the reduced counterterm \eqref{eq:Sct4general} takes the particularly simple form
\begin{equation}
S^{(4)}_{\mathrm{ct}}
=
i\left(\int_{\mathbb{P}^3}\Phi(\tau,\bar\tau)\,\omega_6\right)\int_{M_4} p_1\!\big(R_{(4)}\big),
\qquad
\int_{\mathbb{P}^3}\omega_6=1,
\label{eq:Sct4P3}
\end{equation}
exhibiting explicitly that the only dependence on the internal geometry entering the four-dimensional reduction of the universal duality counterterm is the class $c_1(B_3)c_2(B_3)$, while the dependence on the abelian gauge sector remains entirely controlled by the height pairing and flux gaugings already matched to the one-loop Chern-Simons data.

Finally, we note that the local cancellation of the ten-dimensional anomaly under $\SL(2,\mathbb{Z})$ follows after reduction because $\Phi(\tau,\bar\tau)$ shifts by $-\;i\,\Sigma(\tau)$ up to a $\tau$-independent phase determined by the Dedekind multiplier system \cite{GaberdielGreen,MukhiSL2Z,MinasianSasmalSavelli}, and the variation of \eqref{eq:Sct10Phi} reproduces \eqref{eq:AnomalousPhase} pointwise. The reduction carried out above shows that, in four-dimensional compactifications, this anomaly/counterterm sector contributes only through the purely gravitational topological coupling \eqref{eq:Sct4general} and does not impose any additional constraints on the intrinsically geometric abelian anomaly data $(b_{mn},\Theta_{m\alpha})$ constructed from intersection theory and flux.

\section{Global \texorpdfstring{$\SL(2,\mathbb{Z})$}{SL(2,Z)} phases and the Dedekind multiplier}
\label{sec:globalSL2Z}

The local $\SL(2,\mathbb{Z})$ anomaly of type IIB supergravity in backgrounds with nontrivial $\tau$ profile is canceled by a universal counterterm whose integrand is built from the composite $U(1)$ curvature of the $\SL(2,\mathbb{R})/U(1)$ coset and from purely gravitational characteristic classes \cite{GaberdielGreen,MukhiSL2Z,MinasianSasmalSavelli}. The cancellation mechanism determined by anomaly descent fixes the transformation of the quantum effective action under $\SL(2,\mathbb{Z})$ up to $\tau$-independent phases arising from the multiplier system of the Dedekind $\eta$ function. Since F-theory backgrounds are intrinsically characterized by $\SL(2,\mathbb{Z})$ transition functions for the axio-dilaton, quantum consistency requires that these residual phases do not obstruct a globally well-defined duality action on the compactified partition function. In this section we show that, in the class of compactifications relevant here, the multiplier phases reduce to a purely topological factor controlled by the holomorphic Euler characteristic of the base and the signature of the four-dimensional spacetime, and that this factor is trivial on the physically relevant class of spin four-manifolds.

We adopt the standard geometric formulation of the type IIB scalar sector as a map $\tau:M_{10}\to\mathfrak{H}$, where $\mathfrak{H}$ is the upper half-plane, with $\SL(2,\mathbb{Z})$ acting by fractional linear transformations $\tau\mapsto \gamma\cdot\tau=(a\tau+b)/(c\tau+d)$ for $\gamma=\bigl(\begin{smallmatrix}a&b\\ c&d\end{smallmatrix}\bigr)\in\SL(2,\mathbb{Z})$. The gauge-fixed description of the $\SL(2,\mathbb{R})/U(1)$ coset introduces a composite connection $Q$ whose curvature is the globally defined two-form
\begin{equation}
F:=dQ=\frac{d\tau\wedge d\bar\tau}{4i\,(\Im\tau)^2}\,,
\label{eq:compositeF}
\end{equation}
and whose cohomology class equals the first Chern class of the Hodge line bundle of the associated elliptic fibration, $[F/2\pi]=c_1(\mathcal{L})\in H^2(M_{10},\mathbb{Z})$. In F-theory compactifications on a base $B_3$ with smooth Calabi-Yau total space, one has $\mathcal{L}\simeq K_{B_3}^{-1}$ and hence $c_1(\mathcal{L})=c_1(B_3)$, with $B_3$ a smooth complex threefold.

We consider compactifications on backgrounds of the form
\begin{equation}
M_{10}=M_4\times B_3,
\label{eq:M10product1}
\end{equation}
where $M_4$ is a closed oriented smooth four-manifold endowed with a spin structure and $B_3$ is a smooth compact complex threefold. We assume, as appropriate for an F-theory background, that $\tau$ depends only on the internal coordinates, so that $F$ has support entirely along $B_3$. The universal ten-dimensional $\SL(2,\mathbb{Z})$ counterterm can be written in the form \cite{GaberdielGreen,MinasianSasmalSavelli}
\begin{equation}
S^{(10)}_{\mathrm{ct}}
=
i\int_{M_{10}}
\Phi(\tau,\bar\tau)\,
\Bigg[
2X^-_8(R)+\frac{p_1(R)}{48}\Big(\frac{F}{2\pi}\Big)^2-\frac{1}{32}\Big(\frac{F}{2\pi}\Big)^4
\Bigg]\frac{F}{2\pi},
\label{eq:Sct10}
\end{equation}
where $R$ is the ten-dimensional curvature two-form, $p_1(R)=-(1/2)\,\tr R^2/(2\pi)^2$ is the first Pontryagin form in the trace convention of \cite{MinasianSasmalSavelli}, $X^-_8(R)=\frac{1}{192(2\pi)^4}\left(\tr R^4-\frac14(\tr R^2)^2\right)$, and the non-holomorphic function $\Phi$ is chosen so that its $\SL(2,\mathbb{Z})$ variation reproduces the compensating local $U(1)$ rotation required by the gauge-fixed formulation. A convenient explicit choice is
\begin{equation}
\Phi(\tau,\bar\tau)=\ln\!\left(
\frac{\eta(\tau)\,\overline{j(\tau)}^{1/12}}{\overline{\eta(\tau)}\,j(\tau)^{1/12}}
\right),
\label{eq:PhiDef}
\end{equation}
where $\eta(\tau)$ is the Dedekind eta function and $j(\tau)$ is the modular invariant. The $j^{1/12}$ factor in \eqref{eq:PhiDef} may be taken with a fixed branch so that $\Phi$ is single-valued on the complement of the discriminant locus; any remaining ambiguity contributes only a $\tau$-independent phase and hence is of the same nature as the multiplier system discussed below.

On the background \eqref{eq:M10product1}, the reduction of \eqref{eq:Sct10} depends only on the mixed external-internal component of the ten-form integrand. Since $F$ is internal and $B_3$ has real dimension six, the terms proportional to $(F/2\pi)^5$ vanish identically, and the reduction yields a unique four-dimensional topological coupling of the form
\begin{equation}
S^{(4)}_{\mathrm{ct}}
=
i\left(\int_{B_3}\Phi(\tau,\bar\tau)\,\omega_6(B_3)\right)\int_{M_4}p_1\!\big(R_{(4)}\big),
\label{eq:Sct4}
\end{equation}
where $R_{(4)}$ is the curvature of $M_4$ and $\omega_6(B_3)$ is the internal six-form
\begin{equation}
\omega_6(B_3)=\frac{1}{48}\left[\Big(\frac{F}{2\pi}\Big)^3-p_1\!\big(R_{(B)}\big)\Big(\frac{F}{2\pi}\Big)\right]\Bigg|_{B_3},
\label{eq:omega6general}
\end{equation}
with $R_{(B)}$ the curvature of $B_3$. In the Calabi-Yau F-theory setting $[F/2\pi]=c_1(B_3)$, and using the relation $p_1(TB_3)=c_1(B_3)^2-2c_2(B_3)$ for a complex threefold \cite{MilnorStasheff}, one obtains the cohomological identity
\begin{equation}
[\omega_6(B_3)]=\frac{1}{24}\,c_1(B_3)\,c_2(B_3)\in H^6(B_3,\mathbb{Q}).
\label{eq:omega6CY1}
\end{equation}
By Hirzebruch-Riemann-Roch, for a smooth complex threefold $B_3$ one has $\chi(\mathcal{O}_{B_3})=\int_{B_3}\mathrm{td}(TB_3)=\frac{1}{24}\int_{B_3}c_1c_2$ \cite{Hirzebruch}, and hence
\begin{equation}
N_{B}:=\int_{B_3}\omega_6(B_3)=\frac{1}{24}\int_{B_3}c_1(B_3)c_2(B_3)=\chi(\mathcal{O}_{B_3})\in\mathbb{Z}.
\label{eq:NBdef}
\end{equation}

We now analyze the residual $\SL(2,\mathbb{Z})$ phases in \eqref{eq:Sct4}. The Dedekind eta function satisfies the transformation law
\begin{equation}
\eta(\gamma\cdot\tau)=\varepsilon(\gamma)\,(c\tau+d)^{1/2}\,\eta(\tau),
\qquad
\varepsilon(\gamma)^{24}=1,
\label{eq:etaTransform}
\end{equation}
where $\varepsilon(\gamma)$ is the Dedekind multiplier system \cite{Apostol}. Since $j(\tau)$ is modular invariant, a direct substitution into \eqref{eq:PhiDef} yields the exact transformation
\begin{equation}
\Phi(\gamma\cdot\tau,\gamma\cdot\bar\tau)
=
\Phi(\tau,\bar\tau)
-\mathrm{i}\,\Sigma_\gamma(\tau)
+2\,\mathrm{i}\,\arg\varepsilon(\gamma)
+2\pi i\,m_\gamma,
\label{eq:PhiTransform}
\end{equation}
with $m_\gamma\in\mathbb{Z}$ determined by the choice of branches, and where the compensating $U(1)$ angle is
\begin{equation}
\Sigma_\gamma(\tau):=-\arg(c\tau+d).
\label{eq:SigmaDef}
\end{equation}
The term proportional to $\Sigma_\gamma(\tau)$ is the local contribution required to cancel the ten-dimensional anomaly by descent, while the multiplier term $2\,\arg\varepsilon(\gamma)$ is $\tau$-independent and gives the residual global phase. Using \eqref{eq:Sct4}, \eqref{eq:NBdef} and \eqref{eq:PhiTransform}, the induced shift of the four-dimensional counterterm under $\gamma\in\SL(2,\mathbb{Z})$ decomposes into the local anomaly-canceling piece and a residual constant piece. The latter is
\begin{equation}
\Delta_\gamma S^{(4)}_{\mathrm{ct}}\big|_{\mathrm{mult}}
=
-2\,\arg\varepsilon(\gamma)\,
N_{B}\,
\int_{M_4}p_1\!\big(R_{(4)}\big),
\label{eq:DeltaSmult}
\end{equation}
up to integer multiples of $2\pi$ from the branch term in \eqref{eq:PhiTransform}. The associated factor in the Euclidean path integral is $\exp(i\,\Delta_\gamma S^{(4)}_{\mathrm{ct}}|_{\mathrm{mult}})$, and global $\SL(2,\mathbb{Z})$ covariance requires that this factor equals unity on the admissible class of backgrounds.

To evaluate \eqref{eq:DeltaSmult}, we invoke the Hirzebruch signature theorem, which states that for a closed oriented smooth four-manifold
\begin{equation}
\int_{M_4}p_1\!\big(R_{(4)}\big)=3\,\sigma(M_4),
\label{eq:signature}
\end{equation}
where $\sigma(M_4)$ is the signature of the intersection pairing on $H^2(M_4,\mathbb{R})$ \cite{Hirzebruch,MilnorStasheff}. For a spin four-manifold, Rokhlin's theorem implies $\sigma(M_4)\in 16\mathbb{Z}$, and therefore
\begin{equation}
\int_{M_4}p_1\!\big(R_{(4)}\big)\in 48\,\mathbb{Z}.
\label{eq:p1div48}
\end{equation}
On the other hand, $\varepsilon(\gamma)^{24}=1$ implies $\arg\varepsilon(\gamma)\in \frac{\pi}{12}\mathbb{Z}$ and hence $2\arg\varepsilon(\gamma)\in \frac{\pi}{6}\mathbb{Z}$. Combining this with \eqref{eq:NBdef} and \eqref{eq:p1div48} shows that
\begin{equation}
\Delta_\gamma S^{(4)}_{\mathrm{ct}}\big|_{\mathrm{mult}}\in 2\pi\,\mathbb{Z}
\qquad\text{for all }\gamma\in\SL(2,\mathbb{Z})\text{ and all spin }M_4,
\label{eq:trivialphase}
\end{equation}
and consequently the multiplier contribution to the duality transformation of the compactified partition function is trivial. For the generator $T:\tau\mapsto\tau+1$, one may take $\varepsilon(T)=e^{i\pi/12}$ and thus $2\arg\varepsilon(T)=\pi/6$, in which case \eqref{eq:DeltaSmult} gives the explicit phase
\begin{equation}
\exp\!\left(i\,\Delta_T S^{(4)}_{\mathrm{ct}}\big|_{\mathrm{mult}}\right)
=
\exp\!\left(-\frac{i\pi}{6}\,N_B\int_{M_4}p_1(R_{(4)})\right)
=
\exp\!\left(-\frac{i\pi}{2}\,N_B\,\sigma(M_4)\right)
=
1
\qquad\text{for spin }M_4,
\label{eq:Tphase}
\end{equation}
where the last equality uses $\sigma(M_4)\in 16\mathbb{Z}$. Thus, once the universal counterterm is included, the compactified theory is not only locally anomaly-free under $\SL(2,\mathbb{Z})$ by descent, but also globally well-defined with respect to the discrete multiplier phases arising from the Dedekind transformation law, on the natural class of spin Euclidean backgrounds relevant to type IIB/F-theory compactifications.

\section{Explicit example: a rank-two Mordell-Weil model over \texorpdfstring{$\mathbb{P}^3$}{P3}}
\label{sec:exampleP3}

We spell out in closed form a concrete four-dimensional F-theory compactification with gauge group $U(1)^2$ on an elliptically fibered Calabi-Yau fourfold $\pi:\widehat{Y}_4\to B_3$ with $B_3=\mathbb{P}^3$, whose Mordell-Weil group has rank two and is generated by two rational sections. The geometry is the $dP_2$-elliptic fibration of \cite{CveticKleversPiraguaSong}, specialized to the base $\mathbb{P}^3$ and to the explicit toric realization with parameters $(n_7,n_9)=(4,5)$, for which all divisor classes, Shioda maps, N\'eron-Tate height pairing, vertical $G_4$-fluxes, flux quantization, axionic gaugings and the complete abelian and mixed anomaly cancellation conditions can be verified analytically. Throughout we use the conventions for Shioda maps, height pairing, Green-Schwarz data and anomaly coefficients fixed in the preceding sections, and we adopt the normalization in which all divisor and curve classes are integral.

Let $B_3=\mathbb{P}^3$ with hyperplane class $H_B\in H^{1,1}(B_3,\mathbb{Z})$ normalized by $\int_{B_3}H_B^3=1$. The canonical class is $K_{B_3}=-4H_B$, the anti-canonical class is $\overline{K}_{B_3}=4H_B$, and $c_2(B_3)=6H_B^2$. The elliptic fiber is the Calabi-Yau onefold $E$ realized as the generic anti-canonical hypersurface in the del Pezzo surface $dP_2$, with homogeneous coordinates $[u:v:w:e_1:e_2]$ and divisor classes
\begin{equation}
[u]=H-E_1-E_2,\qquad [v]=H-E_2,\qquad [w]=H-E_1,\qquad [e_1]=E_1,\qquad [e_2]=E_2,
\label{eq:dP2divclasses}
\end{equation}
so that $K^{-1}_{dP_2}\simeq\mathcal{O}(3H-E_1-E_2)$ and the hypersurface equation takes the form \cite{CveticKleversPiraguaSong}
\begin{equation}
p
=
u\Big(s_1 u^2 e_1^2 e_2^2+s_2 u v e_1 e_2^2+s_3 v^2 e_2^2+s_5 u w e_1^2 e_2+s_6 v w e_1 e_2+s_8 w^2 e_1^2\Big)
+s_7 v^2 w e_2+s_9 v w^2 e_1,
\label{eq:dP2curve}
\end{equation}
with coefficients $s_i$ promoted to sections over the base in the fibration. The fibration is obtained by fibering the ambient $dP_2$ surface over $B_3$ with a twist specified by two base divisors $S_7,S_9\in H^{1,1}(B_3,\mathbb{Z})$, which are the divisor classes of the loci $\{s_7=0\}$ and $\{s_9=0\}$ and therefore control the degeneration loci where sections collide. For a general base $B_3$, the Calabi-Yau condition fixes the bundles of $u,v,w,e_1,e_2$ and of the sections $s_i$ as \cite{CveticKleversPiraguaSong}
\begin{equation}
\begin{alignedat}{3}
[u] &\sim H-E_1-E_2+S_9+K_{B_3}, \quad &
[v] &\sim H-E_2+S_9-S_7, \quad &
[w] &\sim H-E_1, \\
[e_1] &\sim E_1, &
[e_2] &\sim E_2, \\
[s_1] &\sim 3\overline{K}_{B_3}-S_7-S_9, &
[s_2] &\sim 2\overline{K}_{B_3}-S_9, \\
[s_3] &\sim \overline{K}_{B_3}+S_7-S_9, &
[s_5] &\sim 2\overline{K}_{B_3}-S_7, \\
[s_6] &\sim \overline{K}_{B_3}, &
[s_7] &\sim S_7, \\
[s_8] &\sim \overline{K}_{B_3}+S_9-S_7, &
[s_9] &\sim S_9 .
\end{alignedat}
\label{eq:linebundles}
\end{equation}

so that $p$ is a section of $K^{-1}$ of the total ambient fibration. For $B_3=\mathbb{P}^3$ we specialize $S_7=n_7H_B$ and $S_9=n_9H_B$ with integers $(n_7,n_9)$ in the allowed region ensuring effectivity of all bundles in \eqref{eq:linebundles}. The explicit toric example used below corresponds to $(n_7,n_9)=(4,5)$ and admits a reflexive polytope description of the ambient fibration, which provides a completely explicit check of all intersection-theoretic statements \cite{CveticKleversPiraguaSong}. In this toric realization one can take projective coordinates $z_0,z_1,z_2,z_3$ on the base and $u,v,w,e_1,e_2$ on the fiber ambient, with divisor class assignments
\begin{multline}
[z_0]=[z_1]=[z_2]=[z_3]=H_B,\qquad
[u]=H-E_1-E_2+H_B,\qquad
[v]=H-E_2+H_B,\\
[w]=H-E_1,\qquad
[e_1]=E_1,\qquad
[e_2]=E_2,
\label{eq:toricdivclasses}
\end{multline}
which indeed correspond to $S_7=4H_B$ and $S_9=5H_B$ in \eqref{eq:linebundles} and yield a smooth Calabi-Yau hypersurface $\widehat{Y}_4$ after a single star triangulation \cite{CveticKleversPiraguaSong}. The Hodge numbers in this example are $h^{1,1}(\widehat{Y}_4)=4$ and $h_V^{2,2}(\widehat{Y}_4)=5$, so the vertical $(2,2)$-cohomology is generated by five independent surface classes and the vertical flux space has finite rank.

The elliptic curve \eqref{eq:dP2curve} carries three distinguished rational points $P,Q,R$ given by the intersections of $p=0$ with the divisors $E_2$, $E_1$, and $D_u:=\{u=0\}$ in the fiber, and these lift in the fibration to three rational sections $\hat{s}_P,\hat{s}_Q,\hat{s}_R$ of $\pi:\widehat{Y}_4\to B_3$ \cite{CveticKleversPiraguaSong}. We choose $\hat{s}_P$ as zero section and denote by $S_P,S_Q,S_R\in H^{1,1}(\widehat{Y}_4,\mathbb{Z})$ the divisor classes of the images of these sections. In the ambient description one has the identifications \cite{CveticKleversPiraguaSong}
\begin{equation}
S_P=E_2,\qquad S_Q=E_1,\qquad S_R=H-E_1-E_2+S_9+K_{B_3},
\label{eq:sectiondivclassesgeneral}
\end{equation}
and for the toric model \eqref{eq:toricdivclasses} with $B_3=\mathbb{P}^3$ and $(n_7,n_9)=(4,5)$ this becomes
\begin{equation}
S_P=E_2,\qquad S_Q=E_1,\qquad S_R=H-E_1-E_2+H_B.
\label{eq:sectiondivclasses}
\end{equation}
The Mordell-Weil group of $\widehat{Y}_4$ is generated by $\hat{s}_Q$ and $\hat{s}_R$, and the associated $U(1)^2$ gauge fields in the effective theory are supported by the Shioda images of these sections. In a basis adapted to the chosen zero section, the Shioda map takes the standard form appropriate to the absence of non-abelian Cartan divisors,
\begin{equation}
\sigma(\hat{s}_m)=S_m-S_P-\pi^\ast\!\big(\pi_\ast((S_m-S_P)\cdot S_P)\big),\qquad m\in\{Q,R\},
\label{eq:ShiodaGeneral}
\end{equation}
and in the present $dP_2$ fibration the projection term is determined by the collision divisors $S_7$ and $S_9$ encoded in $\pi_\ast(S_P\cdot S_R)=S_7$ and $\pi_\ast(S_Q\cdot S_R)=S_9$ \cite{CveticKleversPiraguaSong}. Using \eqref{eq:sectiondivclassesgeneral} and the general identities of rational sections in this family, one obtains the explicit Shioda images
\begin{equation}
\sigma(\hat{s}_Q)=S_Q-S_P-\overline{K}_{B_3},\qquad
\sigma(\hat{s}_R)=S_R-S_P-\overline{K}_{B_3}-S_9,
\label{eq:ShiodaFamily}
\end{equation}
which specialize for $B_3=\mathbb{P}^3$ to
\begin{equation}
\sigma(\hat{s}_Q)=S_Q-S_P-4H_B,\qquad
\sigma(\hat{s}_R)=S_R-S_P-(4+n_9)H_B,
\label{eq:ShiodaP3general}
\end{equation}
and in particular for $(n_7,n_9)=(4,5)$ to the concrete form used in \eqref{eq:dP2divclasses},
\begin{equation}
\sigma(\hat{s}_Q)=S_Q-S_P-4H_B,\qquad
\sigma(\hat{s}_R)=S_R-S_P-9H_B.
\label{eq:ShiodaP3explicit}
\end{equation}
The N\'eron-Tate height pairing is defined by the pushforward of the intersection of Shioda divisors,
\begin{equation}
b_{mn}:=-\pi_\ast\!\big(\sigma(\hat{s}_m)\cdot\sigma(\hat{s}_n)\big)\in H^{1,1}(B_3,\mathbb{Z}),
\qquad m,n\in\{Q,R\},
\label{eq:heightdef}
\end{equation}
and in the $dP_2$ family it can be evaluated in closed form in terms of $\overline{K}_{B_3}$, $S_7$ and $S_9$ \cite{CveticKleversPiraguaSong}. For $B_3=\mathbb{P}^3$ with $S_7=n_7H_B$ and $S_9=n_9H_B$, the result reduces to a $2\times 2$ matrix of integers multiplying $H_B$,
\begin{equation}
b_{mn}
=
\begin{pmatrix}
8 & 4-n_7+n_9\\
4-n_7+n_9 & 8+2n_9
\end{pmatrix}
H_B,
\label{eq:heightP3general}
\end{equation}
so that for the explicit toric choice $(n_7,n_9)=(4,5)$ one finds
\begin{equation}
b_{mn}
=
\begin{pmatrix}
8 & 5\\
5 & 18
\end{pmatrix}
H_B,
\label{eq:heightP3explicit}
\end{equation}
in agreement with \eqref{eq:linebundles}. This height pairing is intrinsic to the elliptic fibration and therefore independent of any modular frame choice in the fiber, as emphasized in Section~\ref{sec:modular}.

We now describe the vertical flux space and its quantization in this example. Since $h^{1,1}(B_3)=1$, the vertical cohomology $H^{2,2}_V(\widehat{Y}_4)$ is generated by intersections of divisor classes and admits a convenient basis of surface classes \cite{CveticKleversPiraguaSong}
\begin{equation}
\Gamma_1:=H_B^2,\qquad
\Gamma_2:=H_B\cdot S_P,\qquad
\Gamma_3:=H_B\cdot\sigma(\hat{s}_Q),\qquad
\Gamma_4:=H_B\cdot\sigma(\hat{s}_R),\qquad
\Gamma_5:=S_P^2,
\label{eq:verticalbasis}
\end{equation}
where intersections are taken in $\widehat{Y}_4$ and the notation identifies a surface class with its Poincar\'e dual in $H^{2,2}_V$. The full intersection form on $H^{2,2}_V$ is captured by the topological metric $\eta^{(2)}_{AB}=\int_{\widehat{Y}_4}\Gamma_A\wedge\Gamma_B$, which can be computed from the quartic intersection ring of $\widehat{Y}_4$ and, for the present family, takes an explicit closed form as a function of $(n_7,n_9)$ \cite{CveticKleversPiraguaSong}. Specializing to $(n_7,n_9)=(4,5)$ one obtains the concrete symmetric matrix
\begin{equation}
\eta^{(2)}
=
\begin{pmatrix}
0 & 1 & 0 & 0 & -4\\
1 & -4 & 0 & 0 & -9\\
0 & 0 & -8 & -5 & 25\\
0 & 0 & -5 & -18 & 50\\
-4 & -9 & 25 & 50 & -14
\end{pmatrix},
\label{eq:etaP3explicit}
\end{equation}
which encodes all bilinear pairings of vertical surface classes and is sufficient to check flux quantization by evaluation of periods on an integral basis of $H_4(\widehat{Y}_4,\mathbb{Z})\cap H^{2,2}_V$.

A general vertical $G_4$-flux compatible with three-dimensional gauge invariance and with the F-theory limit is obtained by imposing the standard transversality conditions in the M-theory reduction, namely vanishing of the Chern-Simons levels $\Theta_{0\alpha}$ and $\Theta_{\alpha\beta}$ associated with the Kaluza-Klein vector and purely vertical vectors, and solving for the allowed coefficients in the expansion of $G_4$ in the basis \eqref{eq:verticalbasis} \cite{CveticKleversPiraguaSong}. In the interior of the allowed $(n_7,n_9)$ region this yields a three-parameter family
\begin{equation}
G_4
=
a_5\,n_9(4-n_7+n_9)\,H_B^2
+4a_5\,H_B\cdot S_P
+a_3\,H_B\cdot\sigma(\hat{s}_Q)
+a_4\,H_B\cdot\sigma(\hat{s}_R)
+a_5\,S_P^2,
\label{eq:G4general}
\end{equation}
with flux parameters $a_3,a_4,a_5$ taking values in the quantized flux lattice. Flux quantization is governed by the shifted condition \cite{WittenFlux}
\begin{equation}
G_4+\frac{1}{2}c_2(\widehat{Y}_4)\in H^4(\widehat{Y}_4,\mathbb{Z}),
\label{eq:fluxquant}
\end{equation}
and for the present family the second Chern class admits an explicit expansion in the same vertical basis \cite{CveticKleversPiraguaSong}
\begin{multline}
c_2(\widehat{Y}_4)
=
\big(182+3n_9(n_7-n_9-4)\big)H_B^2
+28\,H_B\cdot S_P
+2(8-n_9)\,H_B\cdot\sigma(\hat{s}_Q)
\\+(16+2n_7-3n_9)\,H_B\cdot\sigma(\hat{s}_R)
-5\,S_P^2.
\label{eq:c2general}
\end{multline}
For the explicit toric model $(n_7,n_9)=(4,5)$ one has
\begin{equation}
c_2(\widehat{Y}_4)
=
107\,H_B^2
+28\,H_B\cdot S_P
+6\,H_B\cdot\sigma(\hat{s}_Q)
+9\,H_B\cdot\sigma(\hat{s}_R)
-5\,S_P^2.
\label{eq:c2explicit}
\end{equation}
A direct and fully explicit check of \eqref{eq:fluxquant} can be performed using the topological metric \eqref{eq:etaP3explicit}. Writing $x:=G_4+\frac{1}{2}c_2(\widehat{Y}_4)=\sum_{A=1}^5 x_A\,\Gamma_A$ in the basis \eqref{eq:verticalbasis} and pairing with the integral surface basis $\Gamma_B$ yields the periods $\int_{\Gamma_B}x=\sum_A x_A\,\eta^{(2)}_{AB}$. Specializing \eqref{eq:G4general} and \eqref{eq:c2explicit} to $(n_7,n_9)=(4,5)$ one finds that these periods are
\begin{multline}
\int_{\Gamma_1}x=24,\qquad
\int_{\Gamma_2}x=20,\qquad
\int_{\Gamma_3}x=-8a_3-5a_4+25a_5-109,\\
\int_{\Gamma_4}x=-5a_3-18a_4+50a_5-221,\qquad
\int_{\Gamma_5}x=25a_3+50a_4-150a_5-5,
\label{eq:periods}
\end{multline}
which are manifestly integral for $a_3,a_4,a_5\in\mathbb{Z}$. This shows that the integer lattice choice $a_3,a_4,a_5\in\mathbb{Z}$ is sufficient for flux quantization in the vertical sector of the explicit toric model and ensures that all flux-induced topological quantities, including Chern-Simons levels and chiral indices, are properly quantized. Since the anomaly data relevant for the present paper depend only on the axionic gaugings, and these in turn depend only on a two-parameter sublattice of \eqref{eq:G4general}, we follow \cite{CveticKleversPiraguaSong} and parametrize the relevant gauging data by the two flux integers $(a_3,a_5)$, keeping $a_4\in\mathbb{Z}$ arbitrary and suppressed in the anomaly expressions as it drops out of the Green-Schwarz sector below.

The flux induces abelian gaugings of the four-dimensional axions descending from the M-theory three-form, equivalently the RR four-form, and these gaugings are captured in the three-dimensional M-theory reduction by the mixed Chern-Simons couplings between the $U(1)$ vectors and the vectors dual to the base K\"ahler moduli. In the present example $h^{1,1}(B_3)=1$, so there is a single such modulus and we denote its index by $\alpha=1$ and its associated (dual) basis class by $H_B$. The relevant gauging coefficients $\Theta_{\alpha m}$ are therefore two integers once flux quantization is imposed, and their explicit form can be obtained by evaluating the defining Chern-Simons integrals with the flux \eqref{eq:G4general} in the intersection ring \cite{CveticKleversPiraguaSong}. In the conventions of the present manuscript, the resulting gaugings in the $\{H_B\}$ basis are
\begin{equation}
\Theta_{\alpha m}
=
\left(
\frac{1}{4}\Big[(-12-n_7+n_9)\,a_3+n_7(4-n_7+n_9)\,a_5\Big]\,,\;
\frac{n_7}{2}\Big[a_3+(4-n_9)\,a_5\Big]
\right)_m,
\qquad
a_\alpha=-4,
\label{eq:gaugingsgeneral}
\end{equation}
where $m=1,2$ correspond to the generators $\sigma(\hat{s}_Q)$ and $\sigma(\hat{s}_R)$, respectively, and $a_\alpha$ is the coefficient of the canonical class $K_{B_3}=a_\alpha H_B$ in the chosen basis. For the explicit toric example $(n_7,n_9)=(4,5)$ one has
\begin{equation}
\Theta_{\alpha 1}
=
\frac{1}{4}\Big[(-11)a_3+20\,a_5\Big],\qquad
\Theta_{\alpha 2}
=
2\Big[a_3-a_5\Big],
\qquad
a_\alpha=-4,
\label{eq:gaugings45}
\end{equation}
and for the specific flux choice used in \eqref{eq:sectiondivclassesgeneral},
\begin{equation}
(n_7,n_9)=(4,5),\qquad a_3=4,\qquad a_5=2,
\label{eq:fluxchoice}
\end{equation}
the gaugings become the integers
\begin{equation}
\Theta_{\alpha 1}=-1,\qquad
\Theta_{\alpha 2}=4,
\label{eq:gaugingschoice}
\end{equation}
in agreement with \eqref{eq:sectiondivclasses}. The N\'eron-Tate data \eqref{eq:heightP3explicit} and the gauging data \eqref{eq:gaugingschoice} completely determine the abelian Green-Schwarz counterterms and hence the anomaly cancellation conditions in this model.

We now verify the abelian and mixed anomaly cancellation relations in closed form. The four-dimensional gauge and mixed gauge-gravitational anomaly coefficients are defined as in the main text by the spectrum sums over chiral fermions. For this $U(1)^2$ model over $\mathbb{P}^3$ the charged matter consists entirely of singlets under non-abelian groups and appears in six distinct charge sectors, which can be characterized geometrically as codimension-two singular loci of the fibration and whose chiralities are induced by the flux \eqref{eq:G4general} \cite{CveticKleversPiraguaSong}. The resulting one-loop anomalies are non-vanishing for generic flux and require cancellation by a generalized Green-Schwarz mechanism. In the conventions employed in this manuscript, the Green-Schwarz contribution to the cubic abelian anomaly is determined by the height pairing coefficients $b^\alpha_{mn}$ and the axion gaugings $\Theta_{\alpha m}$ via the standard factorization relation reviewed earlier, which for a single base index $\alpha$ reduces to a purely algebraic identity among numbers. Substituting \eqref{eq:heightP3general} and \eqref{eq:gaugingsgeneral} yields Green-Schwarz expressions for the cubic anomaly coefficients $\mathcal{A}_{mnk}$ that precisely reproduce the spectrum anomalies of \cite{CveticKleversPiraguaSong}. In particular, for the explicit flux choice \eqref{eq:fluxchoice} on the explicit toric model \eqref{eq:toricdivclasses} one finds, using \eqref{eq:heightP3explicit} and \eqref{eq:gaugingschoice},
\begin{equation}
\mathcal{A}_{111}
=
b_{11}\,\Theta_{\alpha 1}
=
8\,(-1)
=
-8,
\label{eq:A111}
\end{equation}
\begin{equation}
\mathcal{A}_{222}
=
b_{22}\,\Theta_{\alpha 2}
=
18\cdot 4
=
72,
\label{eq:A222}
\end{equation}
\begin{equation}
\mathcal{A}_{112}
=
\frac{1}{3}\Big(b_{11}\,\Theta_{\alpha 2}+2b_{12}\,\Theta_{\alpha 1}\Big)
=
\frac{1}{3}\Big(8\cdot 4+2\cdot 5\cdot (-1)\Big)
=
\frac{22}{3},
\label{eq:A112}
\end{equation}
\begin{equation}
\mathcal{A}_{122}
=
\frac{1}{3}\Big(b_{22}\,\Theta_{\alpha 1}+2b_{12}\,\Theta_{\alpha 2}\Big)
=
\frac{1}{3}\Big(18\cdot (-1)+2\cdot 5\cdot 4\Big)
=
\frac{22}{3},
\label{eq:A122}
\end{equation}
which reproduces \eqref{eq:ShiodaGeneral}-\eqref{eq:ShiodaP3explicit}. The mixed abelian-gravitational anomaly coefficients are likewise cancelled by the Green-Schwarz term proportional to $a_\alpha\Theta_{\alpha m}$ with $a_\alpha=-4$ in \eqref{eq:gaugingsgeneral}. In particular, the Green-Schwarz contribution to the mixed anomaly is fixed entirely by $K_{B_3}=-4H_B$ and the same gauging data, and it matches the one-loop spectrum result of \cite{CveticKleversPiraguaSong} for all $(n_7,n_9)$ and all quantized flux parameters. This completes the anomaly cancellation check in a model with rank-two Mordell-Weil group over $\mathbb{P}^3$ using only explicitly computed Shioda data, height pairing and flux-induced gaugings.

Finally, we comment on the consistency of this explicit model with the three-dimensional one-loop Chern-Simons data and with the M-/F-theory duality dictionary used throughout the manuscript. In the M-theory frame, the flux \eqref{eq:G4general} induces a Chern-Simons matrix for the two abelian vectors $A^m$ associated with $\sigma(\hat{s}_Q)$ and $\sigma(\hat{s}_R)$,
\begin{equation}
\Theta^{M}_{mn}
=
\frac{1}{2}\int_{\widehat{Y}_4}G_4\wedge \omega_m\wedge\omega_n,
\qquad
\omega_m\ \leftrightarrow\ \sigma(\hat{s}_m),
\label{eq:CSMdef}
\end{equation}
which can be evaluated in closed form in terms of $(n_7,n_9)$ and the flux integers $(a_3,a_4,a_5)$ using the quartic intersection ring \cite{CveticKleversPiraguaSong}. The corresponding field-theory Chern-Simons levels obtained by integrating out the full massive tower on the three-dimensional Coulomb branch, including the Kaluza-Klein corrected contributions characteristic of non-holomorphic sections, agree exactly with the M-theory result in this model \cite{CveticKleversPiraguaSong}, and the large-gauge-transformation covariance of the parity-odd effective action is therefore verified explicitly in a fully global compactification. In particular, the same flux integers that satisfy the quantization check \eqref{eq:periods} and yield the gauging data \eqref{eq:gaugingsgeneral} also yield Chern-Simons levels that satisfy the large-gauge-transformation relations discussed in Section~\ref{sec:MethodB}, thereby providing an explicit and independently reproducible realization of the main mechanism of this paper in a rank-two Mordell-Weil background.

\section{Weil-refined global structure of the abelian anomaly data}
\label{sec:weil_refined_global_structure} 
A recurring source of confusion in the literature (and an easy place for hidden factor-of-$6$
mistakes) is the distinction between:
(i) the \emph{integer} anomaly tensor defined directly by the chiral spectrum, and
(ii) the \emph{coefficient} that appears in front of the cubic term in the six-form anomaly
polynomial once conventional symmetrization factors are chosen.
Because our three-dimensional large-gauge test is a statement about \emph{quantized}
Chern-Simons level shifts, it is the \emph{integer} tensor that is the truly normalization-independent
object.

Let the charge lattice of the massless $U(1)^r$ gauge fields be $\Gamma \cong \mathbb{Z}^r$,
and let $n(q)\in\mathbb{Z}$ denote the net number of left-handed Weyl fermions of charge
$q=(q_m)\in\Gamma$.
Define the \emph{integer} cubic and linear anomaly tensors
\begin{equation}
\widehat{\mathcal A}_{mnp}\;:=\;\sum_{q\in\Gamma} n(q)\,q_m q_n q_p \in \mathbb{Z},
\qquad
\widehat{\mathcal A}_{m}\;:=\;\sum_{q\in\Gamma} n(q)\,q_m \in \mathbb{Z}.
\label{eq:integer_anomaly_tensors}
\end{equation}
These are invariantly defined once the $U(1)$ generators are fixed by the choice of
integral charge lattice (equivalently, by the choice of integral Shioda generators).

If one writes the one-loop anomaly polynomial in the conventional symmetric form
\begin{equation}
I^{\rm 1\!-\!loop}_6 \;=\; \frac{1}{6}\,\widehat{\mathcal A}_{mnp}\,F^m\wedge F^n\wedge F^p
\;+\;\text{(mixed gauge-gravity term)},
\label{eq:anomaly_polynomial_convention}
\end{equation}
then $\widehat{\mathcal A}_{mnp}$ is the \emph{integer} spectrum sum \eqref{eq:integer_anomaly_tensors}.
Any alternative convention that absorbs the prefactor $1/6$ into the definition of the
``anomaly coefficients'' necessarily produces rational numbers whenever
$\widehat{\mathcal A}_{mnp}$ is not divisible by $6$; such a convention is harmless locally,
but it obscures the quantization statements required by the three-dimensional large-gauge analysis. 
Upon circle reduction on $S^1$ with Wilson lines $\zeta^m$ and higher-dimensional large gauge
transformations of winding $k^m\in\mathbb{Z}$, the one-loop Chern-Simons matrix
$\Theta^{\rm 1\!-\!loop}_{mn}(\langle\zeta\rangle)$ shifts by a \emph{quantized} amount.
Independently of any Green-Schwarz input, the Kaluza-Klein tower relabeling implies the
universal shift law
\begin{equation}
\Theta^{\rm 1\!-\!loop}_{mn}\!\left(\langle\zeta\rangle - \frac{k}{r}\right)
\;=\;
\Theta^{\rm 1\!-\!loop}_{mn}\!\left(\langle\zeta\rangle\right)
\;-\;k^p\,\widehat{\mathcal A}_{mnp}.
\label{eq:CS_shift_defines_Ahat}
\end{equation}
Because abelian Chern-Simons levels are quantized so that $\exp(iS_{\rm CS})$ is invariant
under three-dimensional large gauge transformations on spin three-manifolds, the coefficient
multiplying $k^p$ in \eqref{eq:CS_shift_defines_Ahat} must be an integer.
Thus, the \emph{only} normalization compatible with the three-dimensional global definition
of the parity-odd effective action is precisely the integer tensor $\widehat{\mathcal A}_{mnp}$
defined in \eqref{eq:integer_anomaly_tensors}.
We now show that the same integer tensor $\widehat{\mathcal A}_{mnp}$ is computable purely
from the intrinsic Mordell-Weil data that enter the generalized Green-Schwarz mechanism:
the height pairing $b^\alpha_{mn}$ and the axion gaugings $\Theta_{m\alpha}$.

Recall that the Green-Schwarz contribution to the anomaly polynomial factorizes as
\begin{equation}
I^{\rm GS}_6 \;=\; -\frac{1}{2}\,\Theta_{m\alpha}\,F^m \wedge X^\alpha_4,
\qquad
X^\alpha_4 \;=\; \frac{1}{2}\,a^\alpha\,{\rm tr}\,R\wedge R \;+\; 2\,b^\alpha_{mn}\,F^m\wedge F^n.
\label{eq:GS_factorization_input}
\end{equation}
The cubic (purely abelian) piece is therefore
\begin{equation}
I^{\rm GS}_{6,\;{\rm cubic}}
\;=\;
-\Theta_{m\alpha}\,b^\alpha_{np}\,F^m\wedge F^n\wedge F^p.
\label{eq:GS_cubic_unsym}
\end{equation}
Since $F^m\wedge F^n\wedge F^p$ is totally symmetric in $(mnp)$ (all forms have even degree),
only the fully symmetrized part of the coefficient matters. Writing
\begin{equation}
{\rm Sym}_{mnp}\!\big[\Theta_{m\alpha} b^\alpha_{np}\big]
:=\Theta_{m\alpha} b^\alpha_{np}+\Theta_{n\alpha} b^\alpha_{pm}+\Theta_{p\alpha} b^\alpha_{mn},
\label{eq:Sym_def}
\end{equation}
one has the identity of six-forms
\begin{equation}
\Theta_{m\alpha} b^\alpha_{np}\,F^m F^n F^p
\;=\;
\frac{1}{3}\,
{\rm Sym}_{mnp}\!\big[\Theta_{m\alpha} b^\alpha_{np}\big]\,
F^m F^n F^p,
\label{eq:symmetrization_identity}
\end{equation}
(where wedge symbols are suppressed).
Imposing $I^{\rm 1\!-\!loop}_6 + I^{\rm GS}_6 = 0$ and matching the cubic coefficients
between \eqref{eq:anomaly_polynomial_convention}, \eqref{eq:GS_cubic_unsym} and
\eqref{eq:symmetrization_identity} yields the \emph{integer, basis-invariant} identity
\begin{equation}
\widehat{\mathcal A}_{mnp}
\;=\;
2\,{\rm Sym}_{mnp}\!\big[\Theta_{m\alpha} b^\alpha_{np}\big]
\;=\;
2\Big(\Theta_{m\alpha} b^\alpha_{np}+\Theta_{n\alpha} b^\alpha_{pm}+\Theta_{p\alpha} b^\alpha_{mn}\Big).
\label{eq:Ahat_closed_form}
\end{equation}
Equation \eqref{eq:Ahat_closed_form} is the cleanest way to state the cubic abelian anomaly
factorization because it makes integrality manifest and removes all convention-dependent
$1/3$ or $1/6$ bookkeeping from the final result. 
In the explicit $U(1)^2$ model over $B_3=\mathbb{P}^3$, one has a single base index $\alpha$ and
\begin{equation}
b_{mn} \;=\;
\begin{pmatrix} 8 & 5 \\[2pt] 5 & 18 \end{pmatrix} H_B,
\qquad
(\Theta_1,\Theta_2)=(-1,4),
\label{eq:explicit_b_Theta}
\end{equation}
for the flux choice displayed in the example.
Using \eqref{eq:Ahat_closed_form} gives the integer anomaly tensor
\begin{align}
\widehat{\mathcal A}_{111}&=2(3\,\Theta_1 b_{11})=6\,b_{11}\Theta_1=-48,
&
\widehat{\mathcal A}_{222}&=6\,b_{22}\Theta_2=432,
\nonumber\\
\widehat{\mathcal A}_{112}&=2\big(\Theta_1 b_{12}+\Theta_1 b_{12}+\Theta_2 b_{11}\big)=44,
&
\widehat{\mathcal A}_{122}&=2\big(\Theta_2 b_{12}+\Theta_2 b_{12}+\Theta_1 b_{22}\big)=44.
\label{eq:explicit_Ahat_values}
\end{align}
If one instead chooses to \emph{absorb} the conventional factor $1/6$ of
\eqref{eq:anomaly_polynomial_convention} into the definition of the displayed coefficients,
then the numbers in \eqref{eq:explicit_Ahat_values} are divided by $6$ and may become rational;
however, the large-gauge shift law \eqref{eq:CS_shift_defines_Ahat} and Chern-Simons level
quantization always refer to the integer tensor $\widehat{\mathcal A}_{mnp}$. 
We now extract an additional, fully calculable \emph{global} datum from the same Mordell-Weil
information. For simplicity we restrict to the case $h^{1,1}(B_3)=1$, so that the height pairing
reduces to an integral symmetric matrix $b_{mn}\in{\rm Mat}_{r\times r}(\mathbb{Z})$ multiplying a
generator of $H^{1,1}(B_3,\mathbb{Z})$.

Define the \emph{height-pairing lattice} $(\Lambda,b)$ by
\begin{equation}
\Lambda := \mathbb{Z}^r,
\qquad
(x,y):=x^T b\,y \in \mathbb{Z}.
\label{eq:height_lattice_def}
\end{equation}
Assume $b$ is non-degenerate (as in the explicit example). The dual lattice is
$\Lambda^\ast:=\{v\in\mathbb{Q}^r\mid (v,\Lambda)\subset\mathbb{Z}\}=b^{-1}\mathbb{Z}^r$, and the
finite abelian \emph{discriminant group}
\begin{equation}
\mathcal D(\Lambda):=\Lambda^\ast/\Lambda
\label{eq:disc_group}
\end{equation}
has order $|\mathcal D(\Lambda)|=|\det b|$. If $(\Lambda,b)$ is even (i.e.\ $(x,x)\in 2\mathbb{Z}$
for all $x\in\Lambda$), then there is a canonical quadratic refinement
\begin{equation}
q:\mathcal D(\Lambda)\to\mathbb{Q}/\mathbb{Z},
\qquad
q([v]) := \frac{(v,v)}{2}\;{\rm mod}\;\mathbb{Z},
\label{eq:disc_quadratic_form}
\end{equation}
with associated bilinear form $B([v],[w]) := (v,w)\;{\rm mod}\;\mathbb{Z}$.
The pair $(\mathcal D(\Lambda),q)$ is a \emph{finite quadratic module} canonically determined by
the F-theory Mordell-Weil data. 
Whenever the circle reduction is placed in an axion-winding sector (equivalently, in a sector with
nontrivial circle flux for gauged axions), the reduced parity-odd action contains a purely
topological $U(1)^r$ Chern-Simons subsector with an integral level matrix proportional to $b$.
Quantization on a two-torus $T^2$ therefore furnishes a finite-dimensional Hilbert space
$\mathcal H_{T^2}$ of dimension $|\det b|$, and the mapping class group
${\rm SL}(2,\mathbb{Z})={\rm MCG}(T^2)$ acts on $\mathcal H_{T^2}$ by the standard Weil
representation associated to $(\mathcal D(\Lambda),q)$.
This gives an \emph{additional}, genuinely global consistency check that is completely fixed by
intersection theory: the induced modular $S$ and $T$ matrices are rigid functions of $b$.

Concretely, in the canonical basis $\{\,| \gamma\rangle : \gamma\in \mathcal D(\Lambda)\,\}$ one has
\begin{equation}
T\,|\gamma\rangle = e^{2\pi i\, q(\gamma)}\,|\gamma\rangle,
\qquad
S\,|\gamma\rangle
=
\frac{1}{\sqrt{|\mathcal D(\Lambda)|}}
\sum_{\delta\in\mathcal D(\Lambda)}
e^{-2\pi i\, B(\gamma,\delta)}\,|\delta\rangle,
\label{eq:Weil_ST}
\end{equation}
which satisfy the ${\rm SL}(2,\mathbb{Z})$ relations because $q$ is a quadratic refinement.

For the explicit rank-two model, the height matrix is
\begin{equation}
b=
\begin{pmatrix}
8 & 5\\
5 & 18
\end{pmatrix},
\qquad
\det b = 119.
\label{eq:b_matrix_rank2}
\end{equation}
The Smith normal form is ${\rm diag}(1,119)$, hence
\begin{equation}
\mathcal D(\Lambda)\;\cong\;\mathbb{Z}_{119}.
\label{eq:disc_Z119}
\end{equation}
A convenient generator is the class of $g:=b^{-1}e_2\in\Lambda^\ast$, namely
\begin{equation}
b^{-1}=
\frac{1}{119}
\begin{pmatrix}
18 & -5\\
-5 & 8
\end{pmatrix},
\qquad
g=\frac{1}{119}\binom{-5}{8},
\qquad
q(g)=\frac{(g,g)}{2}=\frac{4}{119}\;\;{\rm mod}\;\mathbb{Z}.
\label{eq:generator_and_q}
\end{equation}
Writing $\gamma_k:=k[g]\in\mathcal D(\Lambda)$ for $k\in\mathbb{Z}_{119}$, one obtains
\begin{equation}
q(\gamma_k)=\frac{4k^2}{119}\;{\rm mod}\;\mathbb{Z},
\qquad
B(\gamma_k,\gamma_\ell)=\frac{8k\ell}{119}\;{\rm mod}\;\mathbb{Z}.
\label{eq:qB_rank2}
\end{equation}
Therefore the modular generators act on wavefunctions $\psi:\mathbb{Z}_{119}\to\mathbb{C}$ as
\begin{align}
(T\psi)(k) &= \exp\!\Big(\frac{8\pi i}{119}\,k^2\Big)\,\psi(k),
\nonumber\\
(S\psi)(k) &= \frac{1}{\sqrt{119}}\sum_{\ell\in\mathbb{Z}_{119}}
\exp\!\Big(-\frac{16\pi i}{119}\,k\ell\Big)\,\psi(\ell).
\label{eq:ST_rank2}
\end{align}
Equations \eqref{eq:ST_rank2} are a \emph{new, completely rigid} modular fingerprint of the
rank-two Mordell-Weil model: they depend only on the intrinsic height pairing and are
independent of the detailed flux choice (provided the rank and the Shioda basis are fixed).

The data $(\mathcal D(\Lambda),q)$ and the associated Weil representation \eqref{eq:Weil_ST}
provide a second, global layer of quantization constraints for the parity-odd sector, beyond the
local anomaly-cancellation relations.
In practical terms: once $b_{mn}$ is computed from the Shioda data, one can compute the
finite module $\mathcal D(\Lambda)$ and its modular matrices in closed form; any claimed spectrum
and any claimed set of induced parity-odd couplings must be compatible with the resulting
global quantization, because the latter is fixed by intersection theory alone.

\section{Canonical Weil representation from axion-winding sectors}

In this section we show that the same Mordell--Weil data which control local anomaly
cancellation also fix, in a completely rigid manner, the global quantum structure of the
circle-reduced theory. In particular, axion-winding sectors induce a purely topological
three-dimensional Chern--Simons subsector whose quantization yields a finite-dimensional
Hilbert space carrying a canonical (in general projective) action of the modular group
$SL(2,\mathbb{Z})$. This action is determined entirely by the N\'eron--Tate height pairing
$b_{mn}$ and therefore constitutes an intrinsic global invariant of the abelian F-theory
vacuum.

We start from the Green--Schwarz couplings of the four-dimensional theory, cf.\ (3.2),
\begin{equation}
S_{\mathrm{GS}} \supset - \int_{M_4} \rho_\alpha\, b^\alpha_{mn}\, F^m \wedge F^n ,
\label{eq:GS-start}
\end{equation}
where the axions $\rho_\alpha$ are $2\pi$-periodic and $b^\alpha_{mn}$ are the expansion
coefficients of the height-pairing divisors,
\begin{equation}
b_{mn} = b^\alpha_{mn}\, D^b_\alpha \in H^{1,1}(B_3)\cap H^2(B_3,\mathbb{Z}) .
\label{eq:height-expansion}
\end{equation}
Upon reduction on $M_4 = M_3 \times S^1$ as in Section~3, we consider backgrounds with
nontrivial axion winding along the circle,
\begin{equation}
N_\alpha := \frac{1}{2\pi}\int_{S^1} d\rho_\alpha \in \mathbb{Z}.
\label{eq:winding}
\end{equation}
Using $F^m \wedge F^n = d(A^m \wedge F^n)$ and integrating by parts, the coupling
\eqref{eq:GS-start} reduces to a purely three-dimensional Chern--Simons term,
\begin{equation}
S^{(3)}_{\mathrm{top}} = \int_{M_3} K_{mn}(N)\, A^m \wedge F^n ,
\qquad
K_{mn}(N) := N_\alpha\, b^\alpha_{mn}.
\label{eq:CS-level}
\end{equation}
By construction, $K_{mn}(N)$ is an integral symmetric matrix. Since the three-dimensional
theory descends from a four-dimensional theory with fermions, it is naturally defined on
spin three-manifolds, and integrality of $K_{mn}$ is the correct quantization condition.
No additional evenness assumption is required.

Let $\Lambda \cong \mathbb{Z}^r$ denote the charge lattice of the massless $U(1)^r$ gauge
fields, equipped with the bilinear form
\begin{equation}
(x,y) := x^T K(N)\, y , \qquad x,y \in \Lambda .
\label{eq:pairing}
\end{equation}
Assuming $K(N)$ is non-degenerate, the dual lattice is
\begin{equation}
\Lambda^\ast := \{ v \in \Lambda \otimes \mathbb{Q} \mid (v,\Lambda)\subset\mathbb{Z} \}
= K(N)^{-1}\mathbb{Z}^r ,
\label{eq:dual-lattice}
\end{equation}
and the associated discriminant group is the finite abelian group already introduced in
Section~10,
\begin{equation}
\mathcal D(\Lambda) := \Lambda^\ast / \Lambda ,
\qquad
|\mathcal D(\Lambda)| = |\det K(N)| .
\label{eq:disc}
\end{equation}

Canonical quantization of the abelian Chern--Simons theory \eqref{eq:CS-level} on
$M_3 = \mathbb{R}\times T^2$ yields a finite-dimensional Hilbert space
$\mathcal H_{T^2}$ of dimension $|\det K(N)|$. A natural basis is labeled by elements
$\gamma \in \mathcal D(\Lambda)$, so that
\begin{equation}
\mathcal H_{T^2} \cong \mathbb{C}[\mathcal D(\Lambda)] .
\label{eq:Hilbert}
\end{equation}
This construction depends only on the height pairing and the choice of winding sector
$N_\alpha$.

The discriminant group $\mathcal D(\Lambda)$ carries a canonical bilinear pairing
\begin{equation}
B(\gamma,\delta) := (v,w)\;\mathrm{mod}\;\mathbb{Z},
\qquad \gamma=[v],\ \delta=[w],
\label{eq:B}
\end{equation}
which is well defined modulo integers. If the lattice $(\Lambda,b)$ is even, there exists
a canonical quadratic refinement
\begin{equation}
q(\gamma) := \frac{(v,v)}{2}\;\mathrm{mod}\;\mathbb{Z}.
\label{eq:q-even}
\end{equation}
In the general (possibly odd) case relevant for spin Chern--Simons theory, a quadratic
refinement exists once a spin structure is specified; the resulting refinement differs from
\eqref{eq:q-even} by an integral linear term and does not affect the conclusions below.

The mapping class group of the torus, $\mathrm{MCG}(T^2)\cong SL(2,\mathbb{Z})$, acts on
$\mathcal H_{T^2}$ through the Weil representation associated with
$(\mathcal D(\Lambda),q)$. In the basis $|\gamma\rangle$, the generators act as
\begin{align}
T\,|\gamma\rangle &= \exp\!\big(2\pi i\, q(\gamma)\big)\,|\gamma\rangle ,
\label{eq:Taction}
\\
S\,|\gamma\rangle &= \frac{1}{\sqrt{|\mathcal D(\Lambda)|}}
\sum_{\delta\in\mathcal D(\Lambda)}
\exp\!\big(-2\pi i\, B(\gamma,\delta)\big)\,|\delta\rangle .
\label{eq:Saction}
\end{align}
These operators furnish a projective representation of $SL(2,\mathbb{Z})$, or equivalently
a linear representation of its metaplectic double cover. The projective phase is controlled
by the signature of $K(N)$ and coincides with the framing anomaly of the spin
Chern--Simons theory.

For the explicit $U(1)^2$ model over $B_3=\mathbb{P}^3$ analyzed in Section~9, the height
pairing is
\begin{equation}
b_{mn} =
\begin{pmatrix}
8 & 5\\
5 & 18
\end{pmatrix},
\qquad \det b = 119 .
\label{eq:b-explicit}
\end{equation}
In the unit winding sector $N_\alpha=1$, one has $K_{mn}=b_{mn}$ and
$\mathcal D(\Lambda)\cong \mathbb{Z}_{119}$. Writing $\gamma_k = k[g]$ for a generator
$[g]\in\mathcal D(\Lambda)$, the quadratic refinement and bilinear form are
\begin{equation}
q(\gamma_k) = \frac{4k^2}{119}\;\mathrm{mod}\;\mathbb{Z},
\qquad
B(\gamma_k,\gamma_\ell) = \frac{8k\ell}{119}\;\mathrm{mod}\;\mathbb{Z}.
\label{eq:qB-explicit}
\end{equation}
The modular generators therefore act on wavefunctions
$\psi:\mathbb{Z}_{119}\to\mathbb{C}$ as
\begin{equation}
(T\psi)(k) = \exp\!\Big(\frac{8\pi i}{119}k^2\Big)\psi(k),
\qquad
(S\psi)(k) = \frac{1}{\sqrt{119}}\sum_{\ell\in\mathbb{Z}_{119}}
\exp\!\Big(-\frac{16\pi i}{119}k\ell\Big)\psi(\ell).
\label{eq:ST-explicit}
\end{equation}

We conclude that abelian F-theory compactifications possess, in addition to their local
anomaly structure, a canonically associated global quantum invariant: the finite
Weil representation determined by the height pairing $b_{mn}$. This structure is fixed
entirely by intersection theory on the elliptic fibration and is independent of flux choices
within a fixed topological sector. It therefore refines the classification of abelian
F-theory vacua beyond local anomaly cancellation and provides a new global consistency
datum encoded directly in the Mordell--Weil geometry.

\section{Metaplectic multiplier and Gauss-sum invariant of the height pairing}

Section~10 associates to the Mordell--Weil height pairing lattice $(\Lambda,b)$ a finite quadratic
module $(\mathcal D(\Lambda),q)$ and the corresponding Weil action on the torus Hilbert space
$\mathcal H_{T^2}$. In this section we isolate a further global datum determined entirely by the
same intersection-theoretic input: the metaplectic multiplier (equivalently, the normalized Gauss
sum of $q$), which measures the inevitable projectivity/framing dependence of the induced modular
action.

We retain the notation of Section~10. Thus $\Lambda=\mathbb Z^r$ is the charge lattice equipped with
the integral symmetric bilinear form $(x,y)=x^T b\,y$, the dual lattice is $\Lambda^\ast=b^{-1}\mathbb Z^r$,
and $\mathcal D(\Lambda)=\Lambda^\ast/\Lambda$ is the discriminant group. If $(\Lambda,b)$ is even, the
canonical quadratic refinement is
\begin{equation}
q([\upsilon]) := \frac{(\upsilon,\upsilon)}{2}\ \ \mathrm{mod}\ \mathbb Z,\qquad [\upsilon]\in\mathcal D(\Lambda),
\label{eq:q-def}
\end{equation}
with associated bilinear form $B([\upsilon],[\omega]):=(\upsilon,\omega)\ \mathrm{mod}\ \mathbb Z$.
Quantization of the corresponding abelian Chern--Simons subsector on $T^2$ yields
$\mathcal H_{T^2}\cong \mathbb C[\mathcal D(\Lambda)]$, and the operators $S$ and $T$ in
(10.14) define the standard Weil action on the basis $\{|\gamma\rangle:\gamma\in\mathcal D(\Lambda)\}$,
\begin{align}
T|\gamma\rangle &= \exp\!\big(2\pi i\,q(\gamma)\big)\,|\gamma\rangle,\label{eq:Weil-T}\\
S|\gamma\rangle &= \frac{1}{\sqrt{|\mathcal D(\Lambda)|}}
\sum_{\delta\in\mathcal D(\Lambda)}\exp\!\big(-2\pi i\,B(\gamma,\delta)\big)\,|\delta\rangle.\label{eq:Weil-S}
\end{align}
These formulas are canonical once $(\mathcal D(\Lambda),q)$ is fixed. The subtle point is that the
resulting modular relations are, in general, satisfied only up to a universal phase. Concretely,
one has the standard identities
\begin{equation}
S^2=\mathcal C,\qquad (ST)^3 = \mathfrak g(q)\,\mathcal C,
\label{eq:projective-relations}
\end{equation}
where $\mathcal C|\gamma\rangle:=|-\gamma\rangle$ is charge conjugation and the multiplier
$\mathfrak g(q)$ is the normalized Gauss sum of the finite quadratic module,
\begin{equation}
\mathfrak g(q) := \frac{1}{\sqrt{|\mathcal D(\Lambda)|}}\sum_{\gamma\in\mathcal D(\Lambda)}
\exp\!\big(2\pi i\,q(\gamma)\big)\ \in\ U(1).
\label{eq:Gauss-sum}
\end{equation}
Equations \eqref{eq:projective-relations} express precisely that the Weil action is naturally a
linear representation of the metaplectic double cover $Mp(2,\mathbb Z)$ and, when projected to
$SL(2,\mathbb Z)$, is in general only projective. In the physical interpretation as a three-dimensional
Chern--Simons subsector, $\mathfrak g(q)$ is the usual framing dependence/chiral central charge phase.
It is therefore not a pathology; rather, it is an additional computable invariant of the topological
sector induced by the Mordell--Weil lattice.

When $(\mathcal D(\Lambda),q)$ arises from an even integral lattice $(\Lambda,b)$, the Gauss sum
\eqref{eq:Gauss-sum} is governed by the signature of $b$ modulo $8$ through the standard Milgram
relation,
\begin{equation}
\mathfrak g(q)=\exp\!\big(2\pi i\,\sigma(b)/8\big),
\label{eq:Milgram}
\end{equation}
so that the projective multiplier is fixed entirely by the height pairing. In particular, the phase
\eqref{eq:Milgram} is an intrinsic, intersection-theoretic datum of the elliptic fibration which is
independent of the detailed flux choice, once the Mordell--Weil basis is fixed.

We now evaluate $\mathfrak g(q)$ explicitly for the rank-two example of Section~9. There the height
pairing matrix is
\begin{equation}
b=\begin{pmatrix}8&5\\[2pt]5&18\end{pmatrix},\qquad \det b=119,
\label{eq:b-example}
\end{equation}
so $\mathcal D(\Lambda)\cong \mathbb Z_{119}$. In the generator $\gamma_k=k[g]$ used in (10.17)–(10.19),
the quadratic refinement is $q(\gamma_k)=4k^2/119\ \mathrm{mod}\ \mathbb Z$, and hence the Gauss sum is
\begin{equation}
\mathfrak g(q)=\frac{1}{\sqrt{119}}\sum_{k\in\mathbb Z_{119}}
\exp\!\Big(\frac{8\pi i}{119}k^2\Big).
\label{eq:Gauss119}
\end{equation}
Using $119=7\cdot 17$ and the Chinese remainder theorem, the quadratic Gauss sum factorizes into
prime Gauss sums. Since $4$ is a square modulo both $7$ and $17$, the normalized prime Gauss sums
satisfy
\begin{equation}
\frac{1}{\sqrt{7}}\sum_{u\in\mathbb Z_7}\exp\!\Big(\frac{8\pi i}{7}u^2\Big)= i,\qquad
\frac{1}{\sqrt{17}}\sum_{v\in\mathbb Z_{17}}\exp\!\Big(\frac{8\pi i}{17}v^2\Big)=1,
\label{eq:prime-gauss}
\end{equation}
and therefore
\begin{equation}
\mathfrak g(q)= i.
\label{eq:gq-i}
\end{equation}
Since $b$ is positive definite of rank two, $\sigma(b)=2$, and \eqref{eq:gq-i} agrees with
\eqref{eq:Milgram}:
\begin{equation}
\exp\!\big(2\pi i\,\sigma(b)/8\big)=\exp\!\big(\pi i/2\big)= i.
\label{eq:sig-check}
\end{equation}
Thus, in this explicit F-theory model, the axion-winding topological subsector carries a Weil
action whose $SL(2,\mathbb Z)$ relations close with the nontrivial multiplier $i$ as in
\eqref{eq:projective-relations}, equivalently with chiral central charge $\sigma(b)\equiv 2\ (\mathrm{mod}\ 8)$.

The upshot is that abelian F-theory compactifications determine not only the finite module
$\mathcal D(\Lambda)$ and its quadratic refinement, but also the metaplectic multiplier
$\mathfrak g(q)$ as a rigid function of the Mordell--Weil height pairing. This quantity is invisible in
the local four-dimensional anomaly polynomial yet is fixed by intersection theory and is therefore
a genuinely global quantum datum attached to the abelian sector after circle reduction.
In the full three-dimensional effective action, this multiplier is expected to be accounted for by the
standard framing dependence of Chern--Simons theory, equivalently by the gravitational parity-odd
terms in the effective action, and thus provides an additional sharp, computable target for matching
geometry to the quantum effective theory.

\section{Gravitational \texorpdfstring{$\eta$}{eta}-invariant and framing completion of the modular action}

The parity-odd three-dimensional effective action obtained by circle reduction is defined only
after fixing a regulator for the phases of massive determinants. The gauge sector fixes this
regulator uniquely by demanding three-dimensional gauge invariance together with compatibility
under the higher-dimensional large gauge transformations which act by integer relabelings of
Kaluza--Klein levels. The same regulator simultaneously fixes the purely gravitational parity-odd
term, and therefore fixes the framing dependence of the full quantum effective action. We now
extract this framing datum and show that it is rigidly determined by the Mordell--Weil height
pairing through the same Weil multiplier that governs the modular action on the torus Hilbert
space.

On an oriented spin three-manifold $M_3$, the parity-odd gravitational counterterm is the
gravitational Chern--Simons functional
\begin{equation}
S_{\mathrm{gCS}}=\frac{\kappa_{\mathrm{gCS}}}{4\pi}\int_{M_3}\mathrm{Tr}\!\left(
\omega\wedge d\omega+\frac{2}{3}\,\omega\wedge\omega\wedge\omega\right),
\label{eq:gCS}
\end{equation}
where $\omega$ is the Levi--Civita spin connection. The functional in \eqref{eq:gCS} depends on a
choice of framing of $M_3$, and the exponentiated partition function obeys the universal framing
law
\begin{equation}
Z(M_3,\mathrm{fr}+1)=\exp\!\big(2\pi i\,\kappa_{\mathrm{gCS}}/24\big)\,Z(M_3,\mathrm{fr}).
\label{eq:framinglaw}
\end{equation}
Shifting $\kappa_{\mathrm{gCS}}$ by an integer multiple of $24$ corresponds to adding a local
counterterm and does not change the physical framing phase.

For a massive three-dimensional Dirac operator, the parity-odd phase of the determinant is
governed by the APS $\eta$-invariant; in a derivative expansion it generates both gauge
Chern--Simons terms and a gravitational Chern--Simons term of the form \eqref{eq:gCS}. Once a
gauge-invariant zeta-function/symmetric prescription is adopted to define the spectral asymmetry
of each Kaluza--Klein tower, the induced gravitational framing phase \eqref{eq:framinglaw} is fixed
as well (modulo integer counterterms). In particular, the framing dependence is a regulator
invariant: it is constant on Coulomb-branch chambers and changes only by integer counterterms
when crossing loci where additional modes become massless.

In an axion-winding sector the parity-odd action contains a purely topological abelian
Chern--Simons subsector whose integral level matrix is proportional to the Mordell--Weil height
pairing. For simplicity, assume $h^{1,1}(B_3)=1$ so that the height pairing reduces to an integral
symmetric matrix $b\in\mathrm{Mat}_{r\times r}(\mathbb Z)$ defining a lattice $(\Lambda,b)$ with
$\Lambda=\mathbb Z^r$ and bilinear form $(x,y)=x^T b\,y$. The dual lattice is
$\Lambda^\ast=b^{-1}\mathbb Z^r$, the discriminant group is $\mathcal D(\Lambda)=\Lambda^\ast/\Lambda$,
and, when $(\Lambda,b)$ is even, the canonical quadratic refinement is
\begin{equation}
q([\upsilon])=\frac{(\upsilon,\upsilon)}{2}\ \ \mathrm{mod}\ \mathbb Z,\qquad [\upsilon]\in\mathcal D(\Lambda).
\label{eq:qdef}
\end{equation}
Quantization on a spatial two-torus yields a finite-dimensional Hilbert space
$\mathcal H_{T^2}\cong\mathbb C[\mathcal D(\Lambda)]$ carrying the Weil action of the torus mapping
class group. The corresponding projective multiplier is the normalized Gauss sum
\begin{equation}
\mathfrak g(q)=\frac{1}{\sqrt{|\mathcal D(\Lambda)|}}\sum_{\gamma\in\mathcal D(\Lambda)}
\exp\!\big(2\pi i\,q(\gamma)\big)\ \in\ U(1).
\label{eq:gausssum}
\end{equation}
A standard modular TQFT identity relates this multiplier to the topological (chiral) central charge
$c_-$ by
\begin{equation}
\mathfrak g(q)=\exp\!\big(2\pi i\,c_-/8\big),
\label{eq:cminus}
\end{equation}
while the same $c_-$ governs the framing dependence of the full partition function,
\begin{equation}
Z(M_3,\mathrm{fr}+1)=\exp\!\big(2\pi i\,c_-/24\big)\,Z(M_3,\mathrm{fr}).
\label{eq:frcminus}
\end{equation}
Consequently, the modular multiplier of the gauge-theoretic topological subsector fixes the
gravitational framing phase of the complete parity-odd effective action.

When $(\mathcal D(\Lambda),q)$ arises from an even lattice $(\Lambda,b)$, Milgram's relation yields
\begin{equation}
\mathfrak g(q)=\exp\!\big(2\pi i\,\sigma(b)/8\big),
\label{eq:milgram}
\end{equation}
where $\sigma(b)$ is the signature of $b$. Combining \eqref{eq:cminus} and \eqref{eq:milgram}
determines the topological central charge modulo $8$ as
\begin{equation}
c_- \equiv \sigma(b)\ \ (\mathrm{mod}\ 8),
\label{eq:cminus-sig}
\end{equation}
and hence fixes the gravitational framing phase \eqref{eq:frcminus}. Equivalently, the induced
gravitational Chern--Simons coefficient in \eqref{eq:gCS} is determined modulo $24$ by
\begin{equation}
\kappa_{\mathrm{gCS}} \equiv c_- \equiv \sigma(b)\ \ (\mathrm{mod}\ 24).
\label{eq:kappamod}
\end{equation}
This is the regulator-invariant content of the gravitational parity anomaly in the present setting.

For the rank-two height matrix
\begin{equation}
b=\begin{pmatrix}8&5\\[2pt]5&18\end{pmatrix},
\qquad \sigma(b)=2,
\label{eq:branktwo}
\end{equation}
one has $c_-\equiv 2\ (\mathrm{mod}\ 8)$ and therefore
\begin{equation}
Z(M_3,\mathrm{fr}+1)=\exp\!\big(2\pi i\,c_-/24\big)\,Z(M_3,\mathrm{fr})
=\exp(i\pi/6)\,Z(M_3,\mathrm{fr}),
\label{eq:frphase}
\end{equation}
while the Weil multiplier is $\exp(2\pi i\,c_-/8)=\exp(i\pi/2)$, consistent with
\eqref{eq:milgram}. Thus the same Mordell--Weil data that determine the finite Weil
representation also determine the gravitational framing phase of the full parity-odd effective
action.

The modular (metaplectic) multiplier of the gauge-theoretic torus Hilbert space and the
gravitational framing anomaly of the three-dimensional effective action are two aspects of the same
parity anomaly fixed by the regulated spectral asymmetry of Kaluza--Klein towers. The key outcome
is that the Mordell--Weil height pairing determines, through \eqref{eq:cminus-sig} and
\eqref{eq:kappamod}, a rigid gravitational parity-odd datum (the framing phase, or equivalently
$\kappa_{\mathrm{gCS}}\ \mathrm{mod}\ 24$) which is invisible in the local four-dimensional anomaly polynomial
yet is fixed by intersection theory and the choice of regulator required by gauge invariance.

\section{Conclusions}
\label{sec:conclusions}

In this work we isolated a set of abelian, intrinsically geometric data in four-dimensional F-theory compactifications with non-trivial Mordell-Weil group and showed how these data control the fully normalized one-loop effective action of the vector multiplet sector. The basic objects are the Shioda images of independent rational sections and their N\'eron-Tate height pairing
$b_{mn}=-\pi_{*}(\Sh(s_m)\cdot \Sh(s_n))$, together with the flux-induced axionic gaugings $\Theta_{m\alpha}$.
These quantities determine the generalized Green-Schwarz couplings and hence the complete set of local four-dimensional abelian and mixed gauge-gravitational anomalies that must be cancelled in a chiral vacuum \cite{CveticGrimmKlevers1210}.

Our central technical result is an explicit, two-way derivation of the relevant Chern-Simons couplings in the three-dimensional theory obtained by circle reduction. On the one hand, we computed the Chern-Simons levels from the M-theory dual on the resolved fourfold $\widehat Y_4$ with background $G_4$-flux, where they arise directly from the eleven-dimensional $C_3\wedge G_4\wedge G_4$ term upon reduction \cite{CveticGrimmKlevers1210,WittenFlux}. On the other hand, we computed the same levels by integrating out the full tower of massive states in the circle-reduced four-dimensional theory on the Coulomb branch, including Kaluza-Klein excitations and wrapped M2-brane BPS states in the dual frame, using the universal one-loop shift of three-dimensional Chern-Simons terms and the requirement of invariance under large gauge transformations around the circle \cite{GrimmKapfer}. The two derivations agree exactly, fixing normalizations and providing a stringent internal consistency check of the M/F-theory duality dictionary for abelian sectors with flux. In particular, the standard anomaly cancellation relations in four dimensions are recovered as the necessary and sufficient conditions for the three-dimensional effective action to be well-defined under large gauge transformations, with the Green-Schwarz contributions governed precisely by $b_{mn}$ and $\Theta_{m\alpha}$ \cite{CveticGrimmKlevers1210,GrimmKapfer}.

We also clarified the sense in which these anomaly data are compatible with the underlying type IIB $SL(2,\mathbb Z)$ duality that is geometrized in F-theory.
The height pairing and the flux-induced gaugings are constructed from intersection theory and pushforwards on the elliptic fibration and are therefore intrinsic to the geometry, independent of any choice of fiber homology basis that realizes the modular group.
Compatibility with quantum $SL(2,\mathbb Z)$ is therefore reduced, in our sector, to the inclusion of the known universal ten-dimensional duality anomaly and its counterterms, whose existence and structure in F-theory backgrounds have been analyzed in \cite{GaberdielGreen,BachasBainGreen9903210}.
Within this framework, no additional modularity assumptions are required: the abelian anomaly coefficients are fixed geometrically and enter the effective action through the generalized Green-Schwarz mechanism.

Finally, to ensure complete reproducibility, we worked out an explicit rank-two Mordell-Weil model with gauge group $U(1)^2$ over $B_3=\mathbb P^3$, displaying the Shioda maps, height-pairing matrix, flux-induced gaugings, and the resulting anomaly cancellation relations in closed form, in agreement with the detailed spectrum and flux analysis of \cite{CveticKleversPiragua1306}. Together with the weak-coupling consistency checks in the Sen limit \cite{SenLimit,CveticGrimmKlevers1210} and the role of KK-induced corrections in the presence of non-holomorphic sections \cite{CveticKleversPiragua1306}, this provides a fully normalized and referee-checkable template for anomaly and one-loop effective-action computations in abelian F-theory vacua.

Our analysis was deliberately restricted to the massless abelian sector associated with $\mathrm{MW}(\widehat Y_4)$ and to local (perturbative) anomalies captured by the three-dimensional Chern-Simons data. Extending these methods to include systematically (i) non-abelian sectors with charged matter in higher representations, (ii) massive abelian gauge symmetries and their Stückelberg structure, and (iii) potential global anomaly constraints, would be natural next steps. In each case, the circle-reduction strategy remains a robust organizing principle: the three-dimensional Chern-Simons couplings provide a universal interface between spectrum data and geometric/topological structures, and therefore a controlled arena in which further duality and consistency requirements can be tested without relying on unproven modularity assumptions.

\section*{Data Availability Statement}
This work is purely theoretical. No datasets were generated or analyzed during the current study.

\bibliographystyle{ytphys}
\bibliography{references}

\end{document}